\theoremstyle{plain}
\theoremstyle{definition}
\theoremstyle{remark}
\newcommand{\printfnsymbol}[1]{%
  \textsuperscript{\@fnsymbol{#1}}%
}
\title{Sponsored Search Auction Design Beyond Single Utility Maximization}
\author{Changfeng Xu \and  Chao Peng\thanks{Corresponding authors.} \and Chenyang Xu$^*$ \and Zhengfeng Yang}
\authorrunning{Changfeng Xu, Chao Peng, Chenyang Xu, and Zhengfeng Yang}
\institute{Shanghai Key Laboratory of Trustworthy Computing, Software Engineering Institute, East China Normal University, China\\
\email{51265902120@stu.ecnu.edu.cn,
\{cpeng, cyxu, zfyang\}@sei.ecnu.edu.cn}
}
\begin{document}

\maketitle              

\begin{abstract}

Auction design for the modern advertising market has gained significant prominence in the field of game theory. With the recent rise of auto-bidding tools, an increasing number of advertisers in the market are utilizing these tools for auctions. The diverse array of auto-bidding tools has made auction design more challenging. Various types of bidders, such as quasi-linear utility maximizers and constrained value maximizers, coexist within this dynamic gaming environment. We study sponsored search auction design in such a mixed-bidder world and aim to design truthful mechanisms that maximize the total social welfare. To simultaneously capture the classical utility and the value-max utility, we introduce an allowance utility model. In this model, each bidder is endowed with an additional allowance parameter, signifying the threshold up to which the bidder can maintain a value-max strategy. The paper distinguishes two settings based on the accessibility of the allowance information. In the case where each bidder's allowance is public, we demonstrate the existence of a truthful mechanism achieving an approximation ratio of $(1+\epsilon)$ for any $\epsilon > 0$. In the more challenging private allowance setting, we establish that a truthful mechanism can achieve a constant approximation. Further, we consider uniform-price auction design in large markets and give a truthful mechanism that sets a uniform price in a random manner and admits bounded approximation in expectation.






    \keywords{Auction Design \and Mixed Bidders \and Value Maximizers}
\end{abstract}

\section{Introduction}

Sponsored search advertising is one of the most classical and fundamental problems in game theory~\cite{DBLP:conf/www/DengLX20,DBLP:conf/wine/LiMM10}. In this context, advertisers bid for the placement of their ads on search engine result pages, and the auction mechanism determines the ad ranking and the cost per click for each advertiser. The design of sponsored search auctions holds considerable significance in online advertising and search engine monetization~\cite{DBLP:conf/ijcai/BirmpasCCL22}. The efficiency and fairness of the auction mechanisms directly impact the advertising ecosystem.

This paper studies sponsored search auction design in the modern advertising market. Compared to the traditional auction environment, where bidders universally acknowledge quasi-linear utilities,
the utility functions of bidders in the modern market are more complex and diverse. As highlighted in recent economical literature~\cite{baisa2017auction}, an increasing number of advertisers are making use of auto-bidding tools~\cite{DBLP:conf/www/WilkensCN17,DBLP:conf/nips/BalseiroDMMZ21} to optimize their profits, posing challenges in devising a model that fits reality well.

Auction design for value maximizers has emerged as a pivotal model to navigate this dynamic landscape. Recent literature~\cite{DBLP:journals/corr/WilkensCN16,DBLP:conf/icml/0002W0S0YD23} has extensively explored this domain. Bidders in the value-max model strive to maximize their total obtained values under various constraints, such as budget constraints~\cite{DBLP:conf/sigecom/BalseiroDMMZ22,DBLP:journals/mor/CaragiannisV21,DBLP:conf/sigecom/DevanurW17} and return-on-investment (ROI) constraints~\cite{szymanski2006impact,DBLP:conf/adkdd/AuerbachGS08,DBLP:conf/wine/LuXZ23}. Experimental studies ~\cite{DBLP:journals/corr/WilkensCN16,DBLP:conf/nips/DengZ21} further highlight the effectiveness of the value maximizer model in accurately capturing the behaviors of a significant number of auto-bidding advertisers.

However, as argued in~\cite{DBLP:conf/aaai/LvZZLYLCW23}, the real-world market is a heterogeneous environment where various utility functions are at play. The classical quasi-linear utility model and the emerging value-max model each capture only a subset of advertisers. Recognizing this diversity, the authors initiate the study of a mixed-bidder environment and propose truthful mechanisms designed to accommodate both quasi-linear utility maximizers and value maximizers. We continue along this line of research and make a further generalization of their setting.

\subsection{Our Contributions}

This paper explores sponsored search auction design for mixed bidders. To gracefully model both quasi-linear bidders and (constrained) value maximizers, a novel \emph{allowance utility function} is introduced. In this new model, each bidder is endowed with an additional allowance. If a bidder's payment is less than the allowance, she operates as a value maximizer -- meaning her utility equals the obtained value. Conversely, if a bidder's payment exceeds the allowance, her utility becomes the obtained value minus the portion of payment exceeding the allowance. See a formal definition in~\cref{eq:utility}.

We study sponsored search auction design for allowance utility maximizers and design truthful mechanisms that maximize the total social welfare. The utility function captures and generalizes the two prominent utility types in the literature. A quasi-linear bidder corresponds to the scenario where the allowance is set to $0$, while a value-max bidder can be considered when the allowance is set to $\infty$. Besides, our model allows for more diverse choices of allowance values, accommodating a broader range of advertiser behaviors in the market. 

Refer to the scenario where the allowance information is publicly disclosed as the \emph{public allowance} setting and the setting where it is not disclosed as the \emph{private allowance} setting. Both public allowance and private settings are considered in the paper. The main results of the paper are summarized as follows:
\begin{itemize}
    \item  For the public allowance setting, we first establish the two crucial properties essential for truthful mechanisms -- \emph{allocation monotonicity} and \emph{unit-price strict-monotonicity} (\cref{lem:public_truthful}). Leveraging them can prove that no deterministic truthful mechanism can guarantee $(1+\epsilon)$ approximation for arbitrarily tunable $\epsilon$.  We then give a deterministic truthful mechanism parameterized by $\epsilon$ that admits an approximation ratio of $(1+\epsilon)$ (\cref{thm:public}). 
    

    \item For the private allowance setting, the multi-parameter nature poses significant challenges to the design of truthful mechanisms. Thus, we explore a class of allowance-independent mechanisms and show that even if not using the reported allowance profile, there exists a randomized mechanism that is truthful and admits a constant approximation (\cref{thm:private_constant}).

    \item We further consider uniform price auction design in large markets, where a single bidder's contribution (power) to the total market is very small. We show that there exists a randomized truthful mechanism (for the private allowance setting) capable of setting a uniform price to obtain an expected social welfare $\frac{3}{8} \left( 1-\sqrt{1-\frac{\rho}{3k}} \right)^2 \cdot \opt$,  where $k$ is the number of slots, $\opt$ is the optimal social welfare and $\rho$ is the ratio between $\opt$ and the maximum contribution of a single bidder (\cref{thm:large_market_poly}). When $\rho=O(k^{2/3})$, the mechanism is $O(k^{2/3})$-approximation.
    

\end{itemize}

\subsection{Paper Organization}

\cref{sec:pre} provides a formal definition of our model and introduces the terminology necessary to understand the paper. Then we present our results for the public allowance setting in~\cref{sec:pub}, the private allowance setting in~\cref{sec:private} and the uniform-price auction in~\cref{sec:one_price}.~\cref{sec:con} concludes the paper and outlines potential avenues for future work.

\section{Preliminaries}\label{sec:pre}

We consider the standard sponsored search environment, where the goods for sale are the $k$ ``slots'' for sponsored links on a search results page. The bidder set in this setting consists of $n$ advertisers, each of whom has a standing bid on the keyword relevant to the search. Each slot $j\in [k]$ is associated with click-through-rate (CTR) $\alpha_j \geq 0$, representing the probability that the end user clicks on this slot. Without loss of generality, assume that $\alpha_1\geq \alpha_2 \geq ...\geq \alpha_k$ and $n\geq k$. 

\paragraph{Allowance Utility Maximizer.} Each bidder $i$ desires to purchase at most one slot and has a private value $v_i>0$ for one click, implying that the maximum amount she is willing to spend for slot $j$ is $v_i\alpha_j$. Additionally, our model introduces an allowance $\gamma_i$ for each bidder $i$, and defines the allowance utility function as follows: if the auctioneer assigns slot $j$ to bidder $i$ and charges her $p_i$, the utility
\begin{equation}\label{eq:utility}
    u_i:= \left\{
\begin{aligned}
v_i \alpha_j & -(p_i-\gamma_i)^+ \;\;\;\; & \text{if $p_i\leq v_i\alpha_j$ ,}   \\
&-\infty\;\; & \text{otherwise,} \\
\end{aligned}
\right.
\end{equation}
where $(p_i-\gamma_i)^+ := \max\{0, p_i-\gamma_i\}$.

During the auction, each bidder $i$ confidentially communicates her private information to the auctioneer (with the possibility of misreporting). Subsequently, the auctioneer determines the allocation of slots to bidders $\Pi=\{\pi_1,...,\pi_n\}$ and the corresponding payment requirements $\cP=\{p_1,...,p_n\}$, where $\pi_i,p_i$ represent the assigned slot and the charged money for bidder $i$, respectively. To facilitate the representation of bidders $i$ who are not allocated any slot, we add a dummy slot $k+1$ with CTR $0$ and let $\pi_i=k+1$. Note that each slot except the dummy slot can be assigned to at most one bidder.
The goal of the auctioneer is to design a truthful mechanism that maximizes the total social welfare $\sum_{i\in [n]} v_i \alpha_{\pi_i}$.

\begin{definition}(\cite{DBLP:journals/eatcs/RoughgardenI17})\label{def:DSIC}
	A mechanism is \emph{truthful} (or DSIC) if for any bidder $i$, reporting the private information truthfully always maximizes the utility regardless of other bidders' profiles, and the utility of any truthtelling bidder is non-negative. 
\end{definition}

\section{Public Allowance}\label{sec:pub}


This section considers the public allowance setting and shows the following:
\begin{theorem}\label{thm:public}
    In the public allowance setting, there exists a truthful and $(1+\epsilon)$-approximation deterministic mechanism for any parameter $\epsilon > 0$. Further, no deterministic truthful mechanism can guarantee $(1+\epsilon)$ approximation for arbitrarily tunable $\epsilon$.
\end{theorem}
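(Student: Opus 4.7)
My plan splits naturally into the upper-bound and lower-bound halves of \cref{thm:public}, both leveraging the characterization in \cref{lem:public_truthful}.

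For the upper bound, I would design a mechanism via geometric value discretization. Fix $\epsilon > 0$ and partition $(0, \infty)$ into buckets $[(1+\epsilon)^t, (1+\epsilon)^{t+1})$; let $\tilde v_i$ denote the lower endpoint of the bucket containing the reported $\hat v_i$. Sort the bidders in decreasing order of $\tilde v_i$, breaking ties by a fixed deterministic rule (e.g., bidder index), and assign the $j$-th ranked bidder to slot $j$ for $j \le k$. For the payment of a winner at slot $j$, I would set $p_i$ so that (i) the implied per-click schedule across the slot ladder is strictly decreasing and jumps exactly at bucket thresholds, and (ii) the induced allocation rule is monotone in each bidder's report. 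Concretely, the payment at slot $j$ should be the largest bucket threshold that still leaves bidder $i$ at slot $j$, scaled by $\alpha_j$. Because per-click prices jump discretely at bucket boundaries, a bidder moving up the ladder faces a strictly larger per-click cost (blocking upward deviations for value maximizers), and downward deviations reduce the allocation (blocking shading for quasi-linear bidders). \cref{lem:public_truthful} then certifies truthfulness. For the approximation guarantee, $\tilde v_i \ge v_i/(1+\epsilon)$, and our allocation maximizes $\sum_i \tilde v_i \alpha_{\pi_i}$; hence the achieved welfare is at least $\opt/(1+\epsilon)$.

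For the lower bound, the plan is to combine the two necessary conditions from \cref{lem:public_truthful} with individual rationality to force a contradiction with exact welfare optimality. Consider a family of two-slot two-bidder instances with CTRs $\alpha_1 > \alpha_2 > 0$ and two value-maximizing bidders (allowance $=\infty$) with true values $V$ and $V-\delta$. Welfare-optimality assigns bidder 1 to slot 1 and bidder 2 to slot 2, and by the lemma any truthful mechanism implementing this allocation must satisfy $p_1/\alpha_1 > p_2/\alpha_2$ together with IR constraints $p_1 \le V\alpha_1$ and $p_2 \le (V-\delta)\alpha_2$. On the other hand, truthfulness against the upward deviation of bidder 2 — who, being a value maximizer, strictly prefers slot 1 regardless of the price paid (as long as it is within her IR bound) — requires the mechanism to set the slot-1 payment, when bidder 2 reports $\hat v_2 > V$, to be strictly above $(V-\delta)\alpha_1$. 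As $\delta \to 0$, the gap between the truthful slot-1 payment $p_1 \le V\alpha_1$ and the deterring payment $> (V-\delta)\alpha_1 \to V\alpha_1$ collapses; a careful limit argument shows that no single deterministic payment rule can be simultaneously truthful on the whole family while achieving welfare optimality. Hence no deterministic truthful mechanism is $(1+\epsilon)$-approximation for arbitrarily small $\epsilon$.

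The hardest step, I expect, will be designing the payment rule in the upper bound so that both allocation monotonicity and unit-price strict-monotonicity are respected uniformly over the continuum of allowance values $\gamma_i \in [0, \infty]$: the allowance utility transitions from first-price-like behavior (when $p \le \gamma$) to quasi-linear-like behavior (when $p > \gamma$), and the payment schedule must cope with both regimes consistently, which is why the discretized grid — rather than neighboring bidders' reports — is the natural basis for prices. For the lower bound, the delicate point is making the limit argument fully rigorous: one must rule out mechanisms that use carefully chosen discontinuous payment functions at ties by showing that any such discontinuity still runs afoul of the monotonicity conditions of \cref{lem:public_truthful} on some nearby instance of the family.
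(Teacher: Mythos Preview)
Both halves of your plan have genuine gaps.

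\textbf{Upper bound.} Your allocation via geometric bucketing matches the paper, but the payment rule you propose is inadequate. You set $p_i=(\text{some bucket threshold})\cdot\alpha_j$, an allowance-independent GSP-style price, and then invoke \cref{lem:public_truthful} to ``certify'' truthfulness; but that lemma gives only \emph{necessary} conditions and certifies nothing. In fact no single $\gamma$-free schedule works here: GSP-style pricing is non-truthful for quasi-linear bidders in the usual way, while VCG/Myerson pricing on the rounded bids is non-truthful for value maximizers. For the latter, take $\alpha_1=1,\ \alpha_2=0.9$, bidder~$A$ with rounded value $(1+\epsilon)^5$, and a value-max bidder~$B$ with rounded value $(1+\epsilon)^3$; the Myerson increment for $B$ to climb to slot~$1$ is only $\theta(\alpha_1-\alpha_2)\approx 0.1\,\theta$, well below $v_B\alpha_1$, so $B$ overbids and strictly gains. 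The paper's fix is crucially allowance-dependent: it runs \emph{first price} on the rounded bid as long as the resulting payment does not exceed $\gamma_i$, then switches to the VCG-like increment rule above that point (the threshold $z_m$ in \cref{alg:public}). Your closing paragraph correctly senses that two regimes must be reconciled, but your concrete rule does not do it.

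\textbf{Lower bound.} Your instance family takes allowances equal to $\infty$ (pure value maximizers), but for such bidders the plain first-price auction is deterministic, truthful, \emph{and} exactly welfare-optimal: overbidding violates the payment constraint, and underbidding can only lower the slot while leaving utility $v_i\alpha_{\pi_i}$ unchanged. Hence your family admits an exactly optimal mechanism and yields no contradiction. The paper's impossibility uses an \emph{intermediate} allowance (two bidders with $\gamma_i=\tfrac12$ and a single slot): when bidder~2 raises her bid from $1$ to $1+\epsilon$, unit-price strict-monotonicity forces her payment strictly above $1$, say $1+\delta$; she can then shade to $1+\epsilon'$ with $\epsilon'<\delta$, still win by optimality, pay at most $1+\epsilon'$ by IR, and strictly raise her utility $\tfrac12+\epsilon-(\text{excess over }\gamma)$. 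This argument collapses at both extremes $\gamma=0$ (second price is optimal) and $\gamma=\infty$ (first price is optimal); the intermediate allowance is precisely what makes it bite, and your construction misses it.
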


To derive the algorithmic intuition and layout the techniques employed, we begin by stating the negative result.

\subsection{Lower Bound}\label{subsec:negative}

In this subsection, we first characterize truthful mechanisms for (public) allowance utility maximizers and then prove a lower bound. Use $ x_i(v_i, \vecv_{- i}) $ and $p_i(v_i, \vecv_{-i})$ to denote the obtained CTR and payment of bidder $i$ when she bids $v_{i}$ and other agents bid $\vecv_{- i}$.

\begin{lemma}\label{lem:public_truthful}
    A mechanism is truthful for (public) allowance utility maximizers only if the following two properties hold for each bidder $i\in [n]$ and any $\vecv_{- i}$: 
    \begin{enumerate}
        \item[--] (Allocation Monotonicity) As $v_i$ increases,  $x_i(v_i, \vecv_{- i})$ is non-decreasing. 
        \item[--] (Unit-Price Strict-Monotonicity) If bidder $i$ increases $v_i$ to $v'_i$ such that $x_i(v_i', \vecv_{- i}) > \frac{\gamma_i}{v_i}  > x_i(v_i, \vecv_{- i}) >0 $, the paid unit-price has to strictly increase: $\frac{p_i(v_i', \vecv_{-i})}{x_i(v_i', \vecv_{- i})} > v_i.$
    \end{enumerate}
\end{lemma}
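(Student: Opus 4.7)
The plan is to prove both necessary conditions by contradiction via DSIC-induced inequalities, considering two deviations: the bidder with true value $v_i$ misreporting $v_i'$, and vice versa. The main technical nuisance is that the allowance utility carries a $-\infty$ infeasibility penalty, so one of the two deviation directions may fail to yield a usable inequality. To compensate, I would combine the surviving DSIC inequality with the individual-rationality bound $p_i(v_i,\vecv_{-i}) \le v_i\cdot x_i(v_i,\vecv_{-i})$, which must hold for a truthful report to yield finite utility.

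For allocation monotonicity, suppose toward contradiction that $v_i < v_i'$ yet $x > x'$, where $x,x',p,p'$ abbreviate the allocations and payments at $v_i$ and $v_i'$. Reporting $v_i$ is always feasible for the bidder with true value $v_i'$ (since $p \le v_i x \le v_i' x$), so DSIC yields the core inequality
\begin{equation*}
(p-\gamma_i)^+ - (p'-\gamma_i)^+ \;\ge\; v_i'(x-x').
\end{equation*}
I then split on whether the bidder with true value $v_i$ can feasibly mimic $v_i'$. If $p' \le v_i x'$ (feasible), the symmetric DSIC inequality gives $(p-\gamma_i)^+ - (p'-\gamma_i)^+ \le v_i(x-x')$; combining with the core inequality forces $v_i' \le v_i$, contradicting $v_i' > v_i$. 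If $p' > v_i x'$ (infeasible), I subdivide further on whether $p' \le \gamma_i$. In each subcase, plugging the IR bound $(p-\gamma_i)^+ \le v_i x - \gamma_i$ into the core inequality and using the defining inequality of the case algebraically reduces to either $x' > x$ or $v_i' < v_i$, each contradicting the hypothesis.

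For unit-price strict-monotonicity, under the hypothesis $x' > \gamma_i/v_i > x > 0$, suppose for contradiction that $p'/x' \le v_i$. Then reporting $v_i'$ is feasible for the $v_i$-type bidder. Since $v_i x < \gamma_i$, IR yields $p \le v_i x < \gamma_i$, hence $(p-\gamma_i)^+=0$. DSIC then collapses to $v_i x \ge v_i x' - (p'-\gamma_i)^+$, which forces $(p'-\gamma_i)^+ \ge v_i(x'-x) > 0$ and therefore $p' \ge \gamma_i + v_i(x'-x)$. Chaining with the feasibility assumption $p' \le v_i x'$ gives $\gamma_i \le v_i x$, contradicting $v_i x < \gamma_i$.

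The main obstacle is the infeasibility branch of allocation monotonicity: the standard Myerson two-sided deviation argument breaks when bidder $v_i$ cannot mimic $v_i'$, and one must extract extra structure from IR together with the very inequality $p' > v_i x'$ that defines the case in order to recover a contradiction. The subcase split on whether $p'$ lies above or below $\gamma_i$ is where the allowance parameter genuinely shapes the proof, since the kink of the utility at $p=\gamma_i$ behaves qualitatively differently in the two regimes.
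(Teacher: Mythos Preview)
Your argument is correct and follows the same two-deviation-plus-IR strategy as the paper; the unit-price strict-monotonicity part is essentially identical.

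For allocation monotonicity, however, the ``main obstacle'' you identify is illusory, and the paper avoids your case split entirely. From your core inequality $(p-\gamma_i)^+ - (p'-\gamma_i)^+ \ge v_i'(x-x') > 0$ one immediately gets $p > \gamma_i$, and hence $p - p' \ge (p-\gamma_i)^+ - (p'-\gamma_i)^+ \ge v_i'(x-x')$. Chaining this with IR at $v_i$ (namely $p \le v_i x$) and with $v_i' > v_i$ gives
\[
p' \;\le\; p - v_i'(x-x') \;\le\; v_i x - v_i(x-x') \;=\; v_i x',
\]
so the low-type bidder can \emph{always} feasibly mimic $v_i'$. The standard two-sided Myerson inequality then closes the contradiction directly, with no need to branch on whether $p' \le v_i x'$ or on the position of $p'$ relative to $\gamma_i$. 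Your subcase analysis reaches the same endpoint, just with more work; the paper's route shows that the infeasible branch you worry about is in fact vacuous.
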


\begin{proof}
    We first prove that the allocation has to be monotone. The basic idea is similar to the proof of Myerson's lemma~\cite{DBLP:journals/mor/Myerson81}.
    Consider a bidder $i$. For brevity, we drop the argument $\vecv_{-i}$ in this proof.  According to~\cref{def:DSIC}, a truthful mechanism must guarantee that for any bid $b_i$, 
    \begin{equation}\label{eq:unit-price}
         p_i(b_i) \leq b_i \cdot x_i(b_i);
    \end{equation}
    otherwise, bidder $i$ obtains a negative utility.
    Assume for contradiction there exist two values $\hv_i<v_i$ with $x_i(\hv_i) > x_i(v_i)$. Suppose that bidder $i$ has a private value $v_i$ but misreports $\hv_i$. Due to~\cref{eq:unit-price}, the payment constraint holds:
    \begin{equation*}
        \begin{aligned}
            p_i ( \hv_i) \leq \hv_i \cdot x_i(\hv_i) < v_i \cdot x_i(\hv_i),
        \end{aligned}
    \end{equation*}
    and therefore, the utility after misreporting is $v_i \cdot x_i(\hv_i) - (p_i(\hv_i)-\gamma_i )^+ $. According to the definition of truthfulness,
    \begin{equation}\label{eq:truth1}
        \begin{aligned}
            (p_i(\hv_i)-&\gamma_i )^+ - (p_i(v_i)-\gamma_i )^+   \geq v_i\cdot ( x_i(\hv_i) - x_i(v_i) ) > 0,
        \end{aligned}
    \end{equation}
    implying that $p_i(\hv_i) > \gamma_i$. Thus, 
    \begin{equation}\label{eq:payment}
    \begin{aligned}
         p_i(\hv_i) - p_i(v_i) & \geq (p_i(\hv_i)-\gamma_i )^+ - (p_i(v_i)-\gamma_i )^+ \\
         & \geq v_i\cdot ( x_i(\hv_i) - x_i(v_i) ).
    \end{aligned}
    \end{equation}

    Now suppose that bidder $i$ has a private value $\hv_i$ but misreports $v_i$. Due to~\cref{eq:unit-price} and~\cref{eq:payment}, the allocation still satisfies the payment constraint:
    \begin{equation*}
        \begin{aligned}
            p_i ( v_i) &\leq p_i(\hv_i) -  v_i\cdot ( x_i(\hv_i) - x_i(v_i) )\\
            & \leq p_i(\hv_i) -  \hv_i\cdot ( x_i(\hv_i) - x_i(v_i) ) \\
            & \leq \hv_i \cdot x_i(\hv_i)-  \hv_i\cdot ( x_i(\hv_i) - x_i(v_i) ) \\
            & = \hv_i \cdot x_i(v_i),
        \end{aligned}
    \end{equation*}
    and therefore, the utility after misreporting is $\hv_i \cdot x_i(v_i) - (p_i(v_i)-\gamma_i )^+ $. Again, due to the truthfulness, 
    \begin{equation}\label{eq:truth2}
        \begin{aligned}
            (p_i(\hv_i)-&\gamma_i )^+ - (p_i(v_i)-\gamma_i )^+   \leq \hv_i\cdot ( x_i(\hv_i) - x_i(v_i) )
        \end{aligned}.
    \end{equation}

    Combining~\cref{eq:truth1} and~\cref{eq:truth2}:
    \begin{equation*}
        \begin{aligned}
            v_i\cdot ( x_i(\hv_i) - x_i(v_i) ) \leq \hv_i\cdot ( x_i(\hv_i) - x_i(v_i) ),
        \end{aligned}
    \end{equation*}
    which contradicts our assumption that $x_i(\hv_i) > x_i(v_i)$ and proves the necessity of allocation monotonicity.

    For the second property, we also assume for contradiction that $p_i(v_i') \leq v_i \cdot x_i(v_i')$. Such an assumption implies that when bidder $i$ has a private value $v_i$ but misreports a higher $v_i'$, the payment constraint still holds. Since the mechanism is truthful, misreporting cannot increase the utility, which implies
    \begin{equation*}
        \begin{aligned}
            (p_i(v'_i)-\gamma_i )^+ - (p_i(v_i)-\gamma_i )^+   & \geq v_i\cdot ( x_i(v'_i) - x_i(v_i) ) > 0 . 
        \end{aligned}
    \end{equation*}
    Similarly, we have $p_i(v'_i) > \gamma_i$ and 
    \[ p_i(v'_i) \geq \gamma_i + v_i\cdot ( x_i(v'_i) - x_i(v_i) ).  \]

    Recalling the if-condition that $\gamma_i > v_i \cdot x_i(v_i)$, the above inequality implies that
    \begin{equation*}
        \begin{aligned}
            p_i(v'_i) &> v_i \cdot x_i(v_i) + v_i\cdot ( x_i(v'_i) - x_i(v_i) ) \\
            &= v_i \cdot x_i(v_i'),
        \end{aligned}
    \end{equation*}
    contradicting our assumption, and thus, completing the proof.

\end{proof}    

The lemma establishes two properties of truthful mechanisms for allowance utility maximizers. We observe that classical mechanisms such as VCG auction and Generalized Second Price (GSP) auction satisfy allocation monotonicity but fail in unit-price strict-monotonicity due to their treatment of ties. 
We build upon~\cref{lem:public_truthful} to prove a lower bound.

\begin{lemma}\label{lem:negative}
    For any deterministic truthful mechanism, there always exists a value $\epsilon >0$ such that its approximation ratio is at least $(1+\epsilon)$.
\end{lemma}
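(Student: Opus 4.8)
The plan is to build a small instance with two or three bidders where unit-price strict-monotonicity forces the mechanism to "overcharge" on some allocation, and then exhibit a value profile on which any truthful mechanism is forced to make a suboptimal assignment. Concretely, I would fix a single slot (or two slots with $\alpha_1 > \alpha_2 = 0$, which is the same thing) and two bidders $1,2$, with bidder $1$ having a positive allowance $\gamma_1 > 0$ and bidder $2$ being quasi-linear ($\gamma_2 = 0$). The idea is that in order to win the slot, bidder $1$'s unit price must, by \cref{lem:public_truthful}, jump strictly above the threshold $v_1$ at the moment her allocation crosses from $0$ to a positive value with $v_1 \cdot x_1 < \gamma_1$; this forces a discrete gap in the "effective bid" needed for bidder $1$ to beat bidder $2$. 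Choosing $v_2$ to sit inside that gap makes the mechanism award the slot to bidder $2$ even though bidder $1$ has the larger value — a fixed multiplicative loss.

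In more detail, the key steps in order are: (i) Restrict attention to bidder $1$ facing a fixed competitor report; by allocation monotonicity, $x_1(\cdot, v_2)$ is a non-decreasing step function of $v_1$ taking values among $\{0, \alpha_1\}$, so there is a threshold $\tau(v_2)$ above which bidder $1$ wins. (ii) Show that when $v_1$ is just above $\tau(v_2)$ and still satisfies $v_1 \alpha_1 < \gamma_1$ (which we can arrange by taking $\gamma_1$ large enough relative to the value range of interest), \cref{lem:public_truthful} forces $p_1 / \alpha_1 > v_1$, i.e. the payment strictly exceeds $v_1 \alpha_1$ — but the individual-rationality / payment constraint $p_1 \le v_1 \alpha_1$ from \cref{eq:unit-price} then gives a contradiction, so in fact bidder $1$ \emph{cannot} win in that regime at all. (iii) Conclude that there is some value $v_1^\star$ with $v_1^\star \alpha_1 < \gamma_1$ at which bidder $1$ loses the slot; by monotonicity applied in the other direction (considering a bidder $2$ with tiny value), the slot must then go to bidder $2$. (iv) Pick $v_2$ small — say $v_2 = v_1^\star / 2$ — and compare: the optimal welfare is $v_1^\star \alpha_1$ but the mechanism achieves $v_2 \alpha_1 = v_1^\star \alpha_1 / 2$, giving ratio $2$; more carefully, tracking exactly where the forced switch occurs yields \emph{some} concrete $\epsilon > 0$ (not necessarily large, but bounded away from $0$ by an absolute constant, which is all the statement claims).

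I would be careful to state the argument so that the existential "$\exists \epsilon > 0$" is genuinely extracted from the threshold structure rather than assumed: the point is that unit-price strict-monotonicity is incompatible with the payment cap $p_i \le v_i x_i$ precisely in the window $x_i > \gamma_i/v_i$, so a truthful mechanism simply has no feasible way to let bidder $1$ win while her value-times-CTR is below her allowance yet above a lower threshold; this rules out the arbitrarily-fine approximation that would require the winner to always be (close to) the value-maximizer. The main obstacle I anticipate is the bookkeeping in step (ii)–(iii): I must make sure the window $(\,\cdot\,, \gamma_1/\alpha_1)$ in which the contradiction bites is nonempty and that bidder $1$'s threshold $\tau(v_2)$ can be placed inside it by choosing $v_2$ appropriately — this requires simultaneously reasoning about $x_1$ as a function of $v_1$ (for fixed $v_2$) and as implicitly controlled by $v_2$, i.e. using monotonicity in both coordinates — and then reading off the resulting loss factor as an explicit constant $1 + \epsilon$.
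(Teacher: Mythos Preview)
Your plan has a genuine gap at step~(ii), and it cascades through the rest.

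First, the unit-price strict-monotonicity clause of \cref{lem:public_truthful} cannot be invoked in a single-slot instance. Its hypothesis requires $x_1(v_1',\vecv_{-1}) > \gamma_1/v_1 > x_1(v_1,\vecv_{-1}) > 0$; with one slot, $x_1$ takes only the values $0$ and $\alpha_1$, so the chain would force $\alpha_1 > \gamma_1/v_1 > \alpha_1$, which is impossible. Even reading the clause informally, its conclusion is that the unit price at the \emph{higher} bid $v_1'$ exceeds the \emph{lower} value $v_1$; it never asserts $p_1/\alpha_1 > v_1$ at the current bid, so there is no clash with the IR bound $p_1\le v_1\alpha_1$. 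Consequently the contradiction you aim for in~(ii) does not materialize, and~(iii) (``bidder~1 cannot win in that regime'') is false in general: for your asymmetric profile $(\gamma_1>0,\gamma_2=0)$ one can simply break ties toward bidder~1, charge bidder~1 her own bid (first price) while below the allowance, and charge bidder~2 the threshold --- this is truthful and always allocates to the higher value. Your step~(iv) would then yield a ratio-$2$ lower bound, which directly contradicts the paper's $(1+\epsilon)$-approximate mechanism in \cref{thm:public}.

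The paper's proof proceeds quite differently. It takes \emph{both} bidders with positive allowance $\gamma_1=\gamma_2=1/2$, so that whichever bidder loses the tie at $v_1=v_2=1$ still strictly prefers winning. It then assumes, for contradiction, that the mechanism is $(1+\epsilon)$-approximate for \emph{every} $\epsilon>0$; this forces the loser to win as soon as she raises her bid by any $\epsilon$. A direct truthfulness argument (preventing upward deviation from value $1$) pins her payment to some $1+\delta$ with $0<\delta\le\epsilon$; but then at true value $1+\epsilon$ she can underbid to $1+\epsilon'$ with $\epsilon'<\delta$, still win by the approximation assumption, and pay at most $1+\epsilon'$ by IR --- a strict improvement. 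The crucial ingredients you are missing are (a) symmetric positive allowances, so the tie-loser cares, and (b) the ``for all $\epsilon$'' hypothesis, which is what creates the continuum of winning bids for the downward deviation.
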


\begin{proof}
    We assume for contradiction that there exists a deterministic truthful mechanism $\cM$ such that for any $\epsilon>0$, its approximation ratio is less than $(1+\epsilon)$.
    Consider a scenario where two bidders are competing for a single slot. The allowance of each bidder is set to be $1/2$. Without loss of generality, we can assume that when both of them bid $1$, $\cM$ assigns the slot to the first bidder.

    Supposing that the second bidder increases her bid to $(1+\epsilon)$ for any $\epsilon$, $\cM$ has to assign the slot to her immediately; otherwise, its approximation ratio is at least $(1+\epsilon)$ when the two bidder's private value profile is $<1,1+\epsilon>$. Due to~\cref{lem:public_truthful}, in this case, the second bidder's payment should be strictly larger than $1$. Denote by $(1+\delta)$ the second bidder's payment. Clearly, $\delta\leq \epsilon$. 

    When the second bidder has a private value $(1+\epsilon)$, her current utility is $1/2+\epsilon-\delta$. Then the second bidder has the incentive to misreport a lower bid $(1+\epsilon')$ for $\epsilon' < \delta$. According to the assumption, $\cM$ still has to assign the slot to her but due to individual rationality, $\cM$ can only charge her at most $(1+\epsilon')$, implying that the bidder obtains a higher utility at least $(1/2+\epsilon-\epsilon')$. Thus, $\cM$ is untruthful, which contradicts the assumption and completes the proof.
    
\end{proof}

\subsection{Optimal Deterministic Mechanism}\label{subsec:public}

\begin{algorithm}[tb]
   \caption{\; Public Allowance Auction}
   \label{alg:public}
\begin{algorithmic}[1]
   \STATE {\bfseries Input:} bids $\cB=\{b_i\}_{i\in [n]}$, allowances $\vgamma = \{\gamma_i\}_{i\in [n]}$, CTRs $\valpha=\{\alpha_j\}_{j\in [k]}$ and parameter $\epsilon>0$.
   \STATE {\bfseries Output:} allocation $\Pi$ and payment $\cP$.
   \STATE Initialize $\pi_i\leftarrow k+1$ (the dummy slot) and $p_i\leftarrow 0 $, $\forall i\in [n]$.
   \FOR{each bidder $i\in [n]$}
   \STATE {\color{gray} /* Round $b_i$ down slightly such that it is an exponential multiple of $(1+\epsilon)$. */}
   \STATE Set $t_i \leftarrow \lfloor \log_{1+\epsilon} b_i \rfloor$ and $\barb_i=(1+\epsilon)^{t_i}$.
   \ENDFOR
   \STATE Reindex the bidders such that $\barb_1\geq \barb_2 \geq ... \geq \barb_n$ and break the ties in a fixed manner.
   \STATE {\color{gray} /* The Allocation Rule */}
    \STATE For each bidder $i \in [k] $, $\pi_i \leftarrow i$.
   \STATE {\color{gray} /* The Payment Rule */}
   \FOR{each bidder $i \in [k] $}
    \STATE Based on the allocation rule, compute a function $f_i(z)$ representing the CTR obtained by bidder $i$ when her rounded bid is $(1+\epsilon)^z$ and others' are $\mathbf{\barb}_{-i}$.
    \STATE Find the maximum value $z_{m}$ such that $(1+\epsilon)^{z_m} \cdot f_i(z_m) \leq \gamma_i$.
    \IF{ $t_i \leq z_{m}$ }
    \STATE $p_i \leftarrow \barb_i \cdot  f_i(t_i).  $ 
    \ELSE
    \STATE 
    $ p_i \leftarrow (1+\epsilon)^{z_m} \cdot f_i(z_m) + \sum\limits_{z=z_m+1}^{t_i} (1+\epsilon)^z \cdot \left(f_i(z)-f_i(z-1)\right). $
    \ENDIF
   \ENDFOR
   \STATE \textbf{return} $\Pi=\{\pi_i\}_{i\in [n]}$ and $\cP=\{p_i\}_{i\in [n]}$.
\end{algorithmic}
\end{algorithm}

\begin{figure}[htb]
  \centering
     \begin{subfigure}[b]{0.4\textwidth}
    \includegraphics[width=\textwidth]{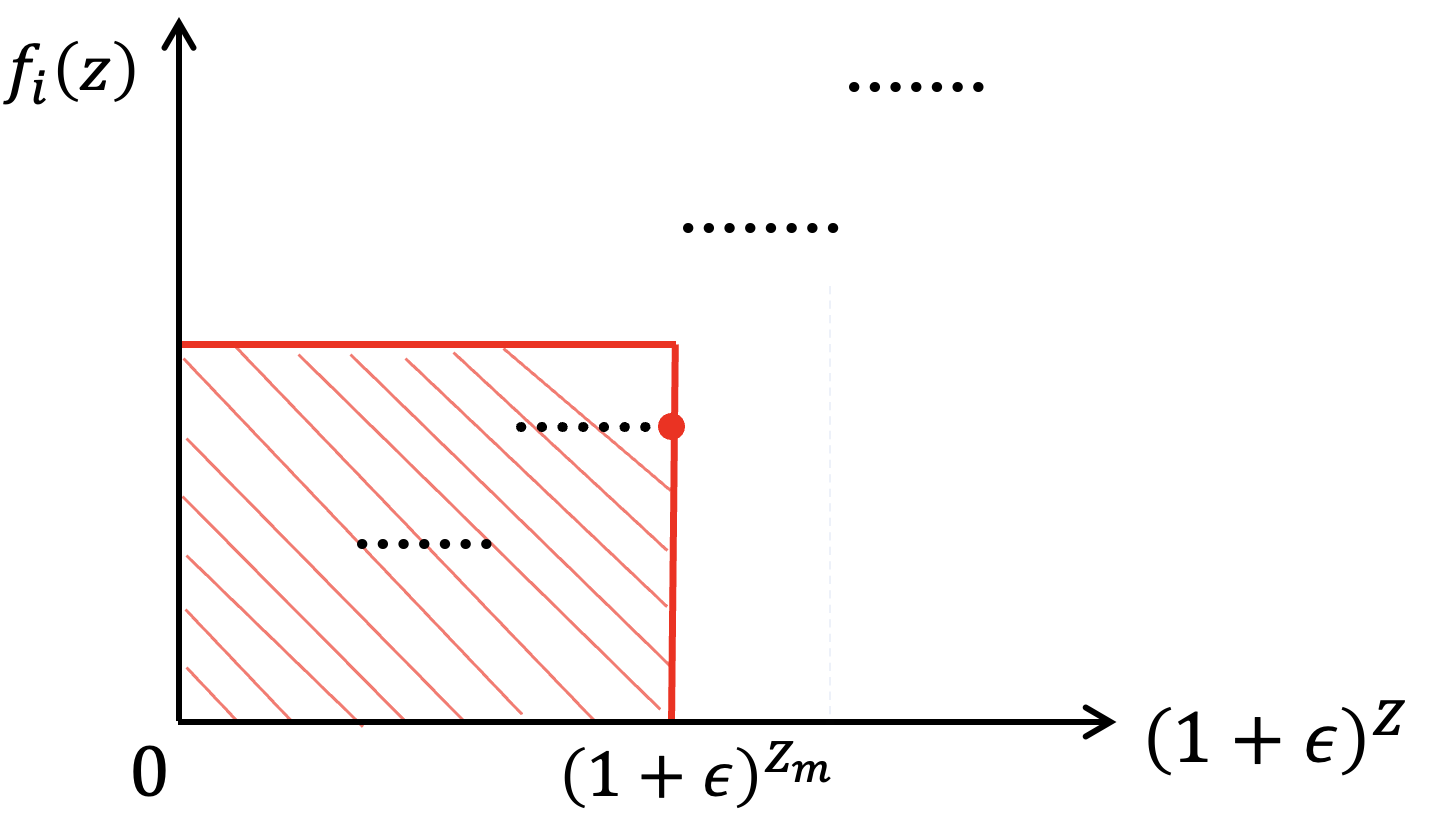}
    \caption{$\gamma_i > (1+\epsilon)^{z_m}\cdot f_i(z_m)$ }
  \end{subfigure}
  \begin{subfigure}[b]{0.4\textwidth}
    \includegraphics[width=\textwidth]{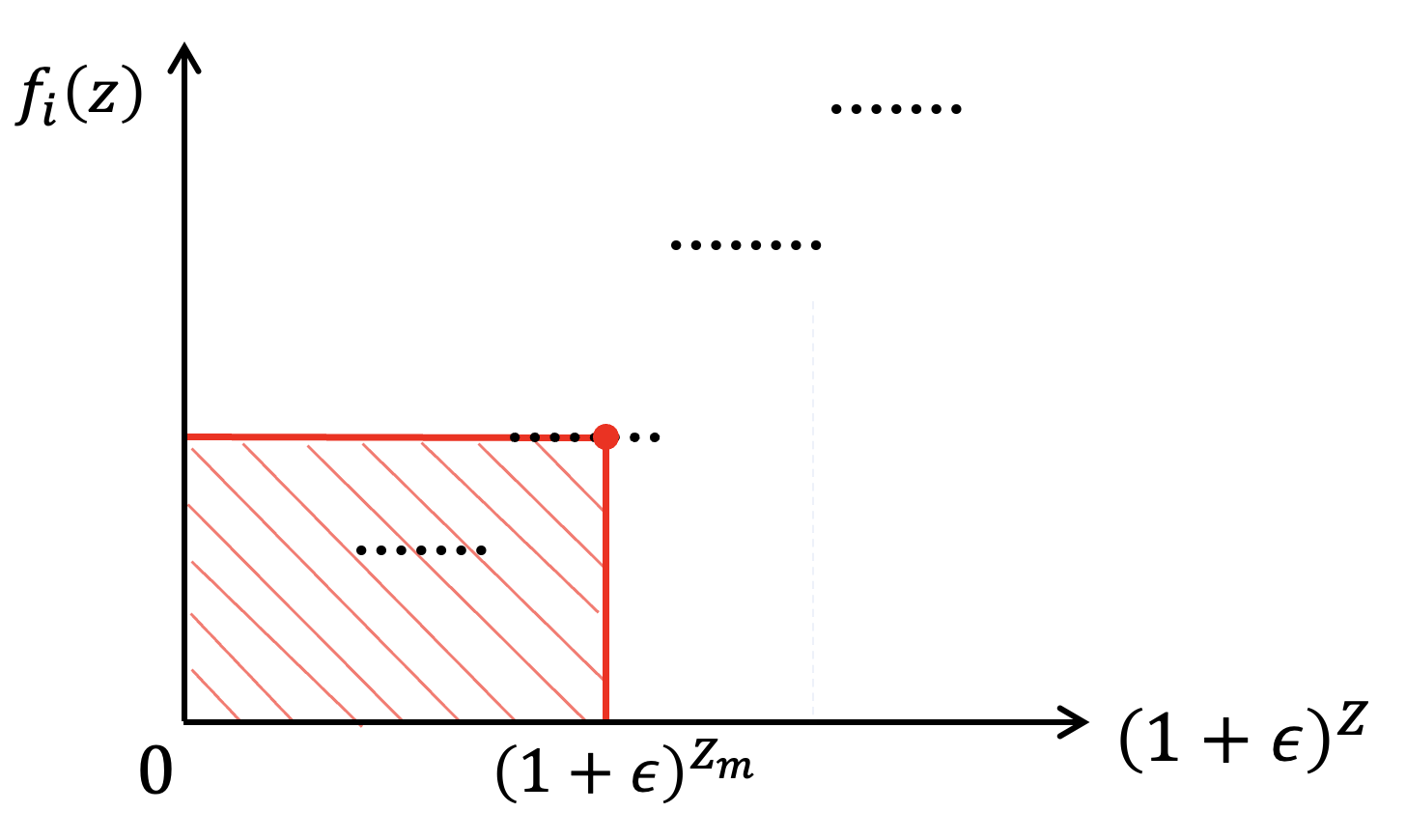}
    \caption{$\gamma_i = (1+\epsilon)^{z_m}\cdot f_i(z_m)$}
  \end{subfigure}
  \caption{ An illustration of function $f_i: Z \rightarrow \valpha$ and two potential cases of $z_m$. Note that the horizontal axis corresponds to $(1+\epsilon)^z$. The shaded area in each figure represents the allowance $\gamma_i$.}
  \label{fig:gamma}
\end{figure}
\begin{figure}[htb]
  \centering
  \begin{minipage}{0.9\textwidth}
     \begin{subfigure}[b]{0.5\textwidth}
    \includegraphics[width=\textwidth]{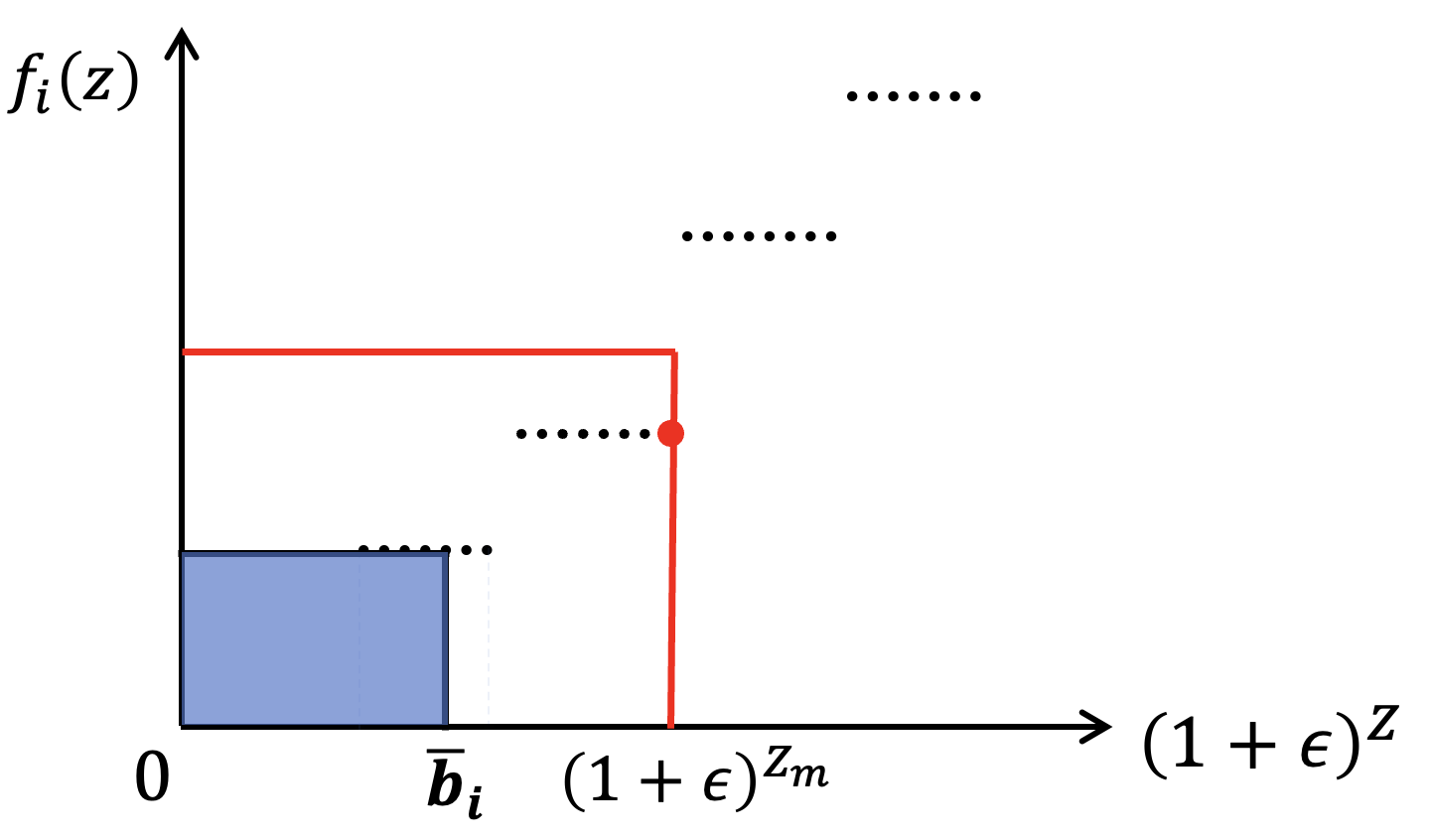}
    \caption{The payment when $\barb_i \leq (1+\epsilon)^{z_m}$}
  \end{subfigure}
  \quad
  \begin{subfigure}[b]{0.5\textwidth}
    \includegraphics[width=\textwidth]{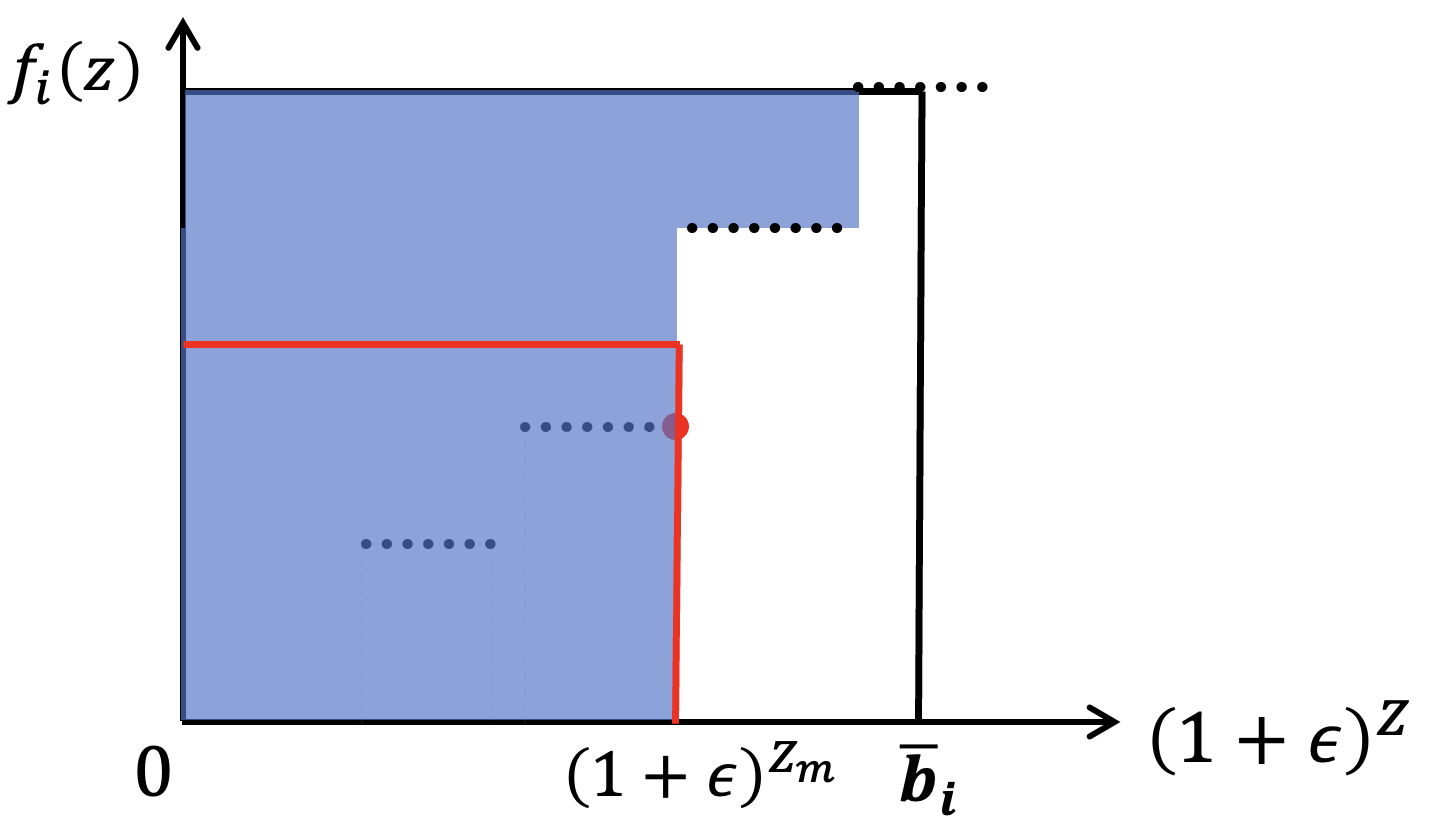}
    \caption{The payment when $\barb_i > (1+\epsilon)^{z_m}$}
  \end{subfigure}
  \quad
  \begin{subfigure}[b]{0.5\textwidth}
    \includegraphics[width=\textwidth]{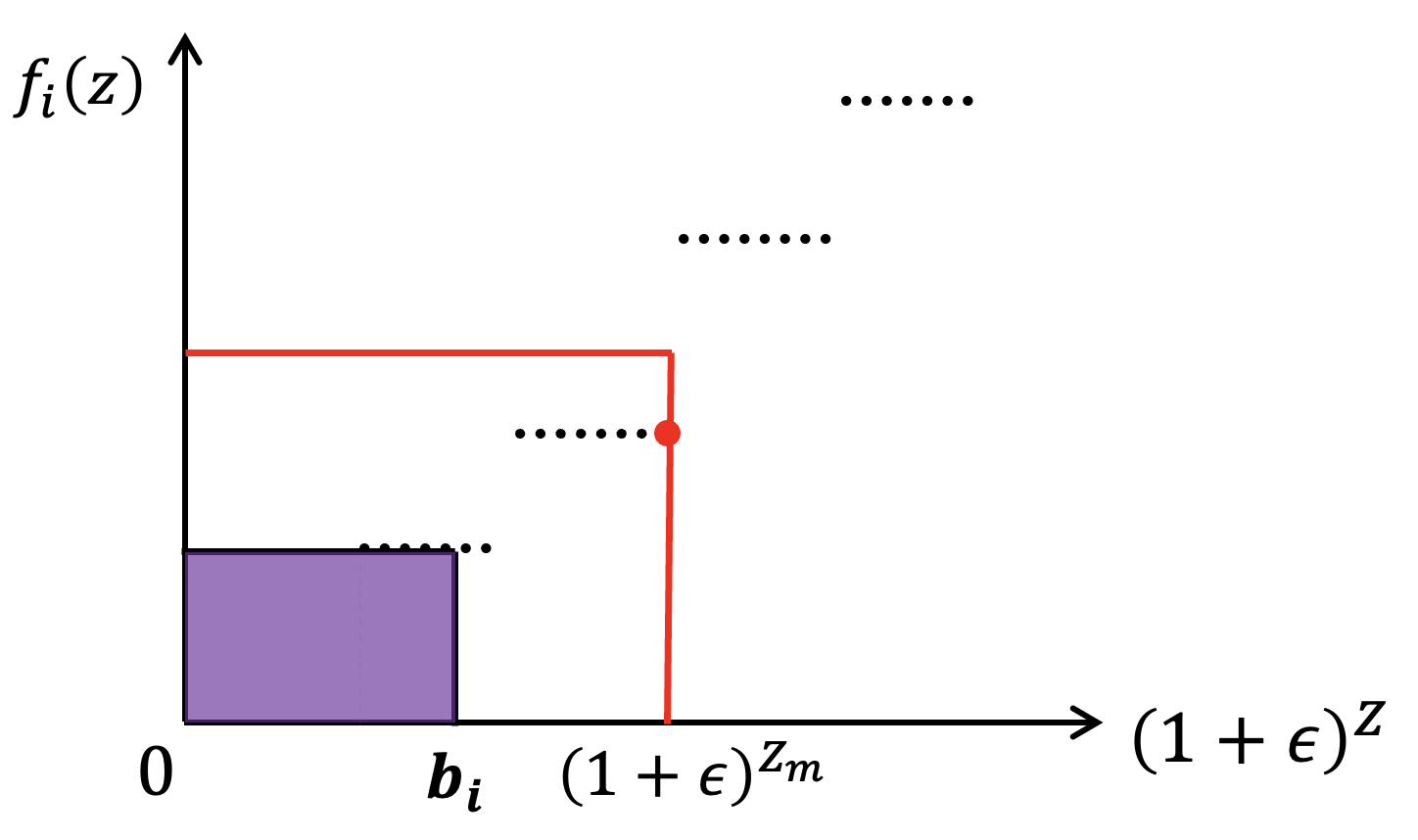}
    \caption{The utility when $\barb_i \leq (1+\epsilon)^{z_m}$}
    \label{subfig:u1}
  \end{subfigure}
  \quad
  \begin{subfigure}[b]{0.5\textwidth}
    \includegraphics[width=\textwidth]{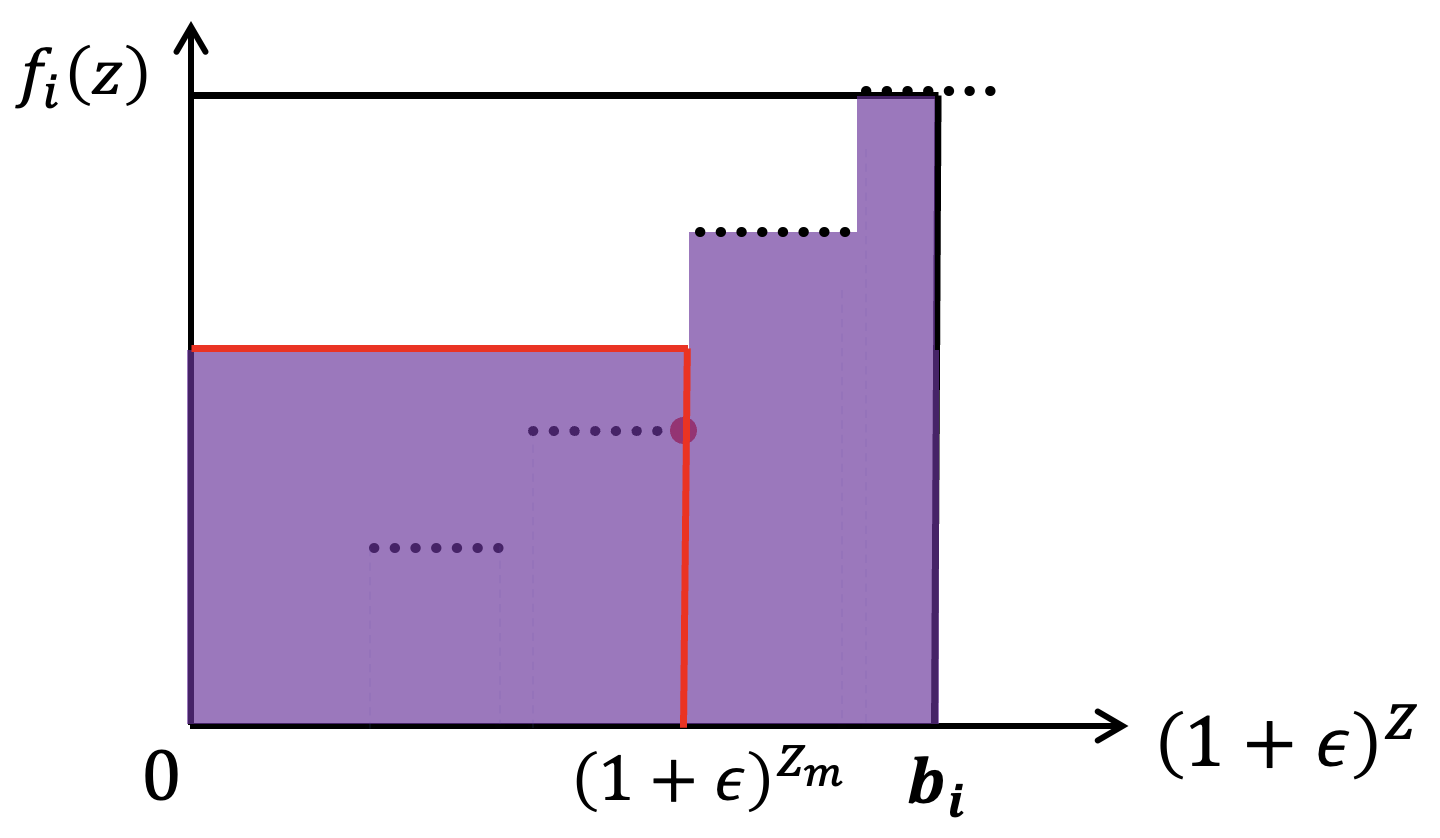}
    \caption{The utility when $\barb_i > (1+\epsilon)^{z_m}$}
    \label{subfig:u2}
  \end{subfigure} 
  \end{minipage}  
  \caption{An illustration of the payment function and the corresponding utilities in the case that $\gamma_i > (1+\epsilon)^{z_m}\cdot f_i(z_m) $. It is worth noting that in~\cref{subfig:u1} and~\cref{subfig:u2}, for an accurate illustration of the utility, we use $b_i$ instead of $\barb_i$. }
  \label{fig:pay_u}
\end{figure}

This subsection gives a truthful deterministic mechanism. As mentioned earlier, a significant factor leading to the lack of truthfulness in classical mechanisms under allowance utilities lies in the handling of tie-breaking situations.
To solve this issue, we first scale each bidder's bid to the nearest integer power of $(1+\epsilon)$, and then design the allocation and payment rules based on the scaled bids.

We state the details in~\cref{alg:public}. The mechanism allocates the slots to the bidders with top $k$ rounded bids in a fixed tie-breaking manner. According to this allocation rule, we compute function $f_i(z)$ and $z_m$ for each bidder $i$. Then we distinguish two cases based on $\barb_i$ and devise distinct payment functions for each case. Intuitively, the payment rule is running ``first pricing'' up to the allowance and then switching to the VCG-like payment. Two figures are provided for illustration in~\cref{fig:gamma} and~\cref{fig:pay_u} as follows.


\begin{proof}[Proof of~\cref{thm:public}]
    ~\cref{lem:negative} has already proven the negative side of the theorem. 
    We now show that~\cref{alg:public} is truthful and $(1+\epsilon)$ approximation. According to our allocation rule, for each slot, the assigned bidder's value is guaranteed to be at least $\frac{1}{1+\epsilon}$ of the value of the bidder who obtains the slot in the optimal solution,  establishing the approximation ratio. 
    
    The following analyzes the truthfulness. Consider a bidder $i$ with value $v_i$ and $t_i=\lfloor \log_{1+\epsilon} v_i \rfloor$. We distinguish two cases: (1) $t_i \leq z_m$, (2) $t_i > z_{m}$. 
    
    In the first case, submitting a truthful bid results in a utility of $ (1+\epsilon)^{t_i} \cdot f_i(t_i)$. If the bidder intends to alter the utility, she must misreport a bid $b_i'$ such that $t_i'$ is either $<t_i$ or $>t_i$. When $t_i'<t_i$, the utility strictly decreases due to the allocation monotonicity; when $t_i'>t_i$, regardless of whether $t_i'$ is larger than $z_m$ or not, the mechanism charges her a unit price of at least $ \min(z_m,t_i') $, violating the payment constraint and leading to a utility of $-\infty$. 

    In the second case, the truthtelling utility is 
    \begin{equation*}
        \begin{aligned}
            u_i =& \;(1+\epsilon)^{t_i} f_i(t_i) + \gamma_i - (1+\epsilon)^{z_m} f_i(z_m) -\sum\limits_{z=z_m+1}^{t_i} (1+\epsilon)^z \left(f_i(z)-f_i(z-1)\right) 
        \end{aligned}
    \end{equation*}

    If the bidder misreports a higher bid $b_i'$ such that $t_i'>t_i$, the new utility is
    \begin{equation*}
        \begin{aligned}
            u_i' =&\; (1+\epsilon)^{t_i} f_i(t_i') + \gamma_i - (1+\epsilon)^{z_m} f_i(z_m) -\sum\limits_{z=z_m+1}^{t_i'} (1+\epsilon)^z \left(f_i(z)-f_i(z-1)\right) \\
            =&\; u_i + (1+\epsilon)^{t_i} (f_i(t_i')-f_i(t_i)) - \sum\limits_{z=t_i+1}^{t_i'} (1+\epsilon)^z \left(f_i(z)-f_i(z-1)\right) \leq u_i.
        \end{aligned}
    \end{equation*}
    On the other hand, if the bidder misreports a lower bid $b_i'$ such that $t_i' < t_i$, we further distinguish two subcases: (\rom{1}) $t_i\geq z_m$; (\rom{2}) $t_i< z_m$. For subcase (\rom{1}), the new utility has
    \begin{equation*}
        \begin{aligned}
            u_i' =&\; (1+\epsilon)^{t_i} f_i(t_i') + \gamma_i - (1+\epsilon)^{z_m} f_i(z_m) -\sum\limits_{z=z_m+1}^{t_i'} (1+\epsilon)^z \left(f_i(z)-f_i(z-1)\right) \\
            =&\; u_i - (1+\epsilon)^{t_i} (f_i(t_i)-f_i(t_i')) + \sum\limits_{z=t_i'+1}^{t_i} (1+\epsilon)^z \left(f_i(z)-f_i(z-1)\right)  \leq  u_i.
        \end{aligned}
    \end{equation*}
    For subcase (\rom{2}), we have $u_i' = (1+\epsilon)^{t_i} f_i(t_i') $. Thus, 
    \begin{equation*}
        \begin{aligned}
            u_i - u_i'  =& \;(1+\epsilon)^{t_i} ( f_i(t_i)-f_i(t_i'))  + \gamma_i- (1+\epsilon)^{z_m} f_i(z_m) \\
            & -\sum\limits_{z=z_m+1}^{t_i} (1+\epsilon)^z \left(f_i(z)-f_i(z-1)\right) \geq 0.
        \end{aligned}
    \end{equation*}

    The analysis above collectively proves the truthfulness.

\end{proof}

\section{Private Allowance}\label{sec:private}

This section considers the setting where the allowance information $\vgamma$ is private. 
Compared to the public allowance setting, this scenario extends beyond the single-parameter environment, introducing increased complexity and posing greater challenges. 
In~\cref{alg:public}, the payment of bidder~$i$ is significantly dependent on the allowance $\gamma_i$. Such a strong correlation between the payment rule and $\vgamma$ indicates that when $\gamma_i$ becomes private, a bidder could benefit from misreporting a lower allowance.


To handle this issue, we explore \emph{$\vgamma$-independent} mechanisms, where both the allocation and payment rules are independent of the allowance profile of bidders, and show the following.

\begin{theorem}\label{thm:private_constant}
    In the private allowance setting, there exists a $\vgamma$-independent mechanism which is truthful and obtains a constant approximation ratio. 
\end{theorem}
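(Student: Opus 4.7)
Because the mechanism is $\vgamma$-independent, every bidder's only strategic lever is her reported value~$v_i$, and truthfulness must hold \emph{uniformly} in all $\gamma_i\geq 0$. My plan is to engineer the allocation so that each bidder's outcome depends on her own report only through a single ``eligible/ineligible'' bit. This restriction will kill the value-maximizer's overbidding incentive (bids cannot upgrade a slot) and the quasi-linear shading incentive (the per-click price she pays is pinned down by the other bidders' reports once she is eligible), and a uniform case analysis will then extend the conclusion to every intermediate $\gamma_i$. Before designing the mechanism I would first sharpen~\cref{lem:public_truthful} for the $\vgamma$-independent setting: the $\gamma_i=0$ limit recovers Myerson monotonicity of $x_i$ in $v_i$, while the $\gamma_i\to\infty$ limit additionally requires the truthful bid to maximize $x_i$ over $\{b : p_i(b)\leq v_i\, x_i(b)\}$. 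A small three-bidder, two-slot example, in which an intermediate bidder overbids to upgrade to a higher-CTR slot while the VCG per-click charge still respects $p_i\leq v_i\, x_i$, shows that standard sponsored-search VCG violates the second condition, diagnosing precisely why bid-based ranking is off-limits here.

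I would then instantiate the mechanism as a random-reserve scheme: draw a reserve~$r$ from a carefully chosen distribution over the reported values (for instance, the bid of a uniformly random bidder, or a random order statistic of the bids), declare bidder~$i$ eligible iff $b_i\geq r$, assign the $k$ slots among the eligibles via a rule whose conditional distribution on any given bidder depends on that bidder's report only through her eligibility bit (e.g., a pre-drawn random priority over the bidders that is used to order the eligibles before the bids are seen), and charge each slot-$j$ winner the uniform per-click $r\alpha_j$. Truthfulness then reduces to a compact case analysis over $v_i$ versus~$r$, verified without any reference to~$\gamma_i$: under-reporting below~$r$ loses eligibility and zeros the utility, which is dominated by the non-negative truth utility whenever $v_i\geq r$; over-reporting past~$r$ when $v_i<r$ produces a per-click charge $r>v_i$ that violates the payment constraint and gives $-\infty$; and any eligibility-preserving deviation leaves the outcome distribution, hence the expected utility, unchanged.

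The principal obstacle is the constant-ratio guarantee. A naive uniform within-eligible lottery is already truthful, but on a highly skewed CTR profile it can lose a $\Theta(k)$ factor because the single top-value bidder may land in a low-CTR slot; whereas any bid-rank-based assignment would reopen the very overbidding door that the $\vgamma$-independence constraint closes. I plan to resolve this tension through a random-sampling-and-charging argument: tune the distribution of~$r$ so that every bidder contributing to~$\opt$ is eligible with constant probability, and couple this with pre-drawn, bid-independent random tags that determine slot priority so that, with constant probability, each $\opt$ pair $(i,j)$ is realized by the mechanism with bidder~$i$ landing on slot~$j$ (or on a slot of comparable CTR). Summing these constant-probability events over the $k$ $\opt$ summands then recovers a constant fraction of~$\opt$. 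Making ``bid-free within-eligible allocation'' compatible with ``constant-factor welfare'' is the technical heart of the theorem.
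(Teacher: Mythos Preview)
Your truthfulness reasoning is sound: once a bidder's report influences only an ``eligible/ineligible'' bit and the per-click price is fixed by the others, the case analysis you give goes through uniformly in~$\gamma_i$ (you should be careful that the reserve~$r$ is drawn from the \emph{other} bidders' reports, or that the reserve-setter is excluded, but that is a fixable detail).

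The real gap is the constant-ratio claim. You correctly identify the obstacle---a bid-independent priority among the eligibles cannot steer the top-value bidder toward the top-CTR slot---but the fix you sketch does not close it. ``Pre-drawn, bid-independent random tags'' are, by definition, uncorrelated with the value ranking, so for a geometric profile such as $\alpha_j=2^{-j}$, $v_i=2^{-i}$ with $n=k$, any single reserve that leaves $t$ bidders eligible yields expected welfare $\Theta(1/t)$ under random priority, while $\opt=\Theta(1)$; averaging over any distribution on~$r$ cannot recover a constant. More generally, the restriction ``the outcome of bidder~$i$ depends on~$b_i$ only through one eligibility bit'' forces the within-eligible assignment to be oblivious to the value order, and that obliviousness is exactly what costs the factor~$k$ on skewed instances. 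Getting each $\opt$ pair $(i,j)$ to occur with \emph{constant} probability is impossible under this restriction when there are $\Theta(k)$ competing eligibles.

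The paper resolves the tension differently. It runs, with constant probabilities, two $\vgamma$-independent sub-mechanisms. The first is a rounded-bid single-slot auction on slot~$1$ (\cref{alg:private_single}), which already gives a $\rho(1+\epsilon)$ guarantee whenever some bidder contributes a $1/\rho$ fraction of $\opt$ (\cref{cor:private_single}). The second is a random-sampling posted-price scheme (\cref{alg:private_large}): split the bidders into a pricing sample~$L$ and a buying set~$R$, set a \emph{slot-specific} unit price $\tfrac12 z_j$ for slot~$j$ from the $j$-th highest bid in~$L$, and let each bidder in~$R$ pick her utility-maximizing remaining slot. Truthfulness is the standard posted-price argument---prices and arrival order are independent of a buyer's own report---yet the allocation \emph{is} bid-sensitive, because a higher-value buyer will prefer a higher-CTR (higher-price) slot. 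A matching concentration lemma (\cref{lem:min_concen}) then shows that under the large-market condition this recovers a constant fraction of~$\opt$. The randomized combination (\cref{alg:private_final}) covers both regimes. The conceptual point you are missing is that you do not need to collapse the bid's influence to a single eligibility bit to stay truthful; slot-specific posted prices drawn from a held-out sample give a bid-dependent allocation that is still dominant-strategy truthful for every~$\gamma_i$, and that extra expressiveness is precisely what buys the constant factor.
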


To gain algorithmic intuition and streamline the analysis, we begin by considering the simplest case where $k=1$. Even in the presence of only one slot, developing a truthful and $\vgamma$-independent mechanism is not a trivial task.

\subsection{Single-Slot Auction}\label{subsec:private_single}

In this subsection, we consider the single-slot environment. Without loss of generality, assume that the slot's CTR is~$1$. Given our aim for a truthful mechanism, the auction must be truthful for any quasi-linear utility maximizer, i.e., the bidder with $\gamma_i=0$. For this type of bidders, Myerson’s Lemma~\cite{DBLP:journals/mor/Myerson81} dictates that the allocation rule must be monotone, and the payment rule is unique for a monotone allocation. Specifically, a bounded approximation mechanism should allocate the slot to one of the highest bidders and charge her the threshold bid.

Since we target a $\vgamma$-independent mechanism, we have to treat bidders as if they do not disclose their allowance profiles. Consequently, we are unable to distinguish whether a bidder employs quasi-linear utility or not. In light of this, to maintain truthfulness and a bounded approximation ratio, the classical second-price mechanism seems the most viable choice, where the highest bidder wins (with ties broken arbitrarily) and is charged the second-highest bid.

We examine the truthfulness of the second-price auction for bidders with $\gamma_i > 0$. We observe that these bidders are truthful in most cases, except in situations involving multiple highest bidders. When there is more than one highest bidder, quasi-linear utility maximizers have no incentive to lie because their utilities, whether they obtain the slot or not, consistently amount to~$0$. However, the scenario differs for other types of bidders. Upon securing the slot, they gain a non-zero utility, leading to a potential misreport of a higher bid to obtain the slot while paying an amount that adheres to the constraint. To tackle this challenge, we employ a scaling technique.

\begin{algorithm}[tb]
   \caption{\; $\vgamma$-Independent Single-Slot Auction}
   \label{alg:private_single}
\begin{algorithmic}[1]
   \STATE {\bfseries Input:} bids $\cB=\{b_i\}_{i\in [n]}$ and parameter $\epsilon>0$.
   \STATE {\bfseries Output:} allocation $\Pi$ and payment $\cP$.
   \STATE Initialize $\pi_i\leftarrow 2$ (the dummy slot) and $p_i\leftarrow 0 $, $\forall i\in [n]$.
   \STATE {\color{gray} /* Round $b_i$ down slightly such that it is an exponential multiple of $(1+\epsilon)$. */}
   \STATE For each bidder $i\in [n]$, set $\barb_i=(1+\epsilon)^{\lfloor \log_{1+\epsilon} b_i \rfloor}$.
   
   \STATE Use $\cS$ to denote the set of bidders with the highest $\barb$.
   \IF{ $|\cS| > 1$}
    \STATE Let $k$ be the smallest index in $\cS$. {\color{gray} /* Break ties in a fixed manner. */}
    \STATE Set $\pi_k \leftarrow 1$ and $p_k\leftarrow \barb_k$.
    \ELSE
    \STATE Let $\barb_k$ be the highest (rounded) bid and $\barb_\ell$ be the second-highest.
    \STATE Set $\pi_k \leftarrow 1$ and $p_k \leftarrow (1+\epsilon)\cdot \barb_\ell$.
   \ENDIF
   \STATE \textbf{return} $\Pi=\{\pi_i\}_{i\in [n]}$ and $\cP=\{p_i\}_{i\in [n]}$.
\end{algorithmic}
\end{algorithm}

\begin{theorem}\label{thm:private_single}
    For any $\epsilon >0 $,~\cref{alg:private_single} is truthful and obtains an approximation ratio of $(1+\epsilon)$.
\end{theorem}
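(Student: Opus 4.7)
The plan is to establish the two properties separately: the approximation ratio and truthfulness.

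The approximation ratio is immediate from the rounding rule. The winning bidder $k$ satisfies $\barb_k \geq \barb_j$ for every $j$. By construction, $\barb_j > b_j/(1+\epsilon)$ and $\barb_k \leq b_k$, so $b_k > b_j/(1+\epsilon)$ for all $j$. Once truthful bidding is established, this means the allocated bidder's true value is within a factor of $(1+\epsilon)$ of $\max_j v_j$, which is the optimal single-slot welfare.

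For truthfulness, I would fix an arbitrary bidder $i$ with type $(v_i, \gamma_i)$ and an arbitrary other-bid profile $\vec{b}_{-i}$, and define $M = \max_{j \neq i} \barb_j$. Because the mechanism depends only on the rounded bids, every possible report of bidder $i$ is equivalent to one of three regimes determined by the relation between $\barv'_i$ and $M$: losing ($\barv'_i < M$), tied ($\barv'_i = M$), or uniquely winning ($\barv'_i > M$). The plan is to compute, for each regime, the resulting utility under the payment rule of Algorithm~\ref{alg:private_single}, and then for each of the three possible truthful regimes, exhaust the possible deviations into the other two regimes and verify that none is strictly improving.

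The scaling by $(1+\epsilon)$ is what rules out all upward deviations. To cross from a losing or tied regime into the unique-winning regime, bidder $i$ must report $\barv'_i \geq (1+\epsilon) M$, which triggers a payment of $(1+\epsilon) M$. Because her truthful rounded bid $\barv_i \leq M$ implies $v_i < (1+\epsilon) \barv_i \leq (1+\epsilon) M$, the payment exceeds her true value, violating the payment constraint in~\cref{eq:utility} and collapsing her utility to $-\infty$. Symmetric reasoning blocks downward deviations into the losing regime, since those forfeit the slot entirely, and any non-negative winning utility would be preferable.

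The main obstacle is the downward deviation from the unique-winning regime into the tied regime: here the payment drops from $(1+\epsilon) M$ to $M$, yielding an apparent saving of $\epsilon M$. This is the case where the scaling interacts nontrivially with the allowance $\gamma_i$ and the fixed tie-breaking rule, and the argument must carefully show that any such deviation does not strictly improve bidder $i$'s utility (for instance, by arguing that bidder $i$'s rounded bid already equals $M$ is equivalent to her truthful report placing her in the tied regime rather than the unique-winning one, so the deviation is vacuous). Once this remaining check is discharged together with the straightforward regime-to-same-regime comparisons, the DSIC property follows and combines with the approximation bound to give the theorem.
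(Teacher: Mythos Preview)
Your approximation argument and your regime decomposition (losing / tied / uniquely-winning relative to $M=\max_{j\neq i}\barb_j$) track the paper's proof closely; your treatment of a truthful loser who tries to bid up is essentially the paper's case split $\barb_i<\barb_k$ versus $\barb_i=\barb_k$. Where you diverge is in the winning case: the paper disposes of it in one line (``the allocation is monotone and the payment does not rely on $b_i$''), whereas you flag the downward deviation from uniquely-winning (payment $(1+\epsilon)M$) into tied (payment $M$) as ``the main obstacle'' and propose to discharge it by arguing the deviation is vacuous.

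That is where your proposal has a genuine gap, and it is not one you can close with the mechanism as stated. The deviation is \emph{not} vacuous. Take $n=2$, a single slot with CTR~$1$, bidder~$1$ with $v_1=(1+\epsilon)^2$, $\gamma_1=0$, and bidder~$2$ with $v_2=(1+\epsilon)$. Truthfully $\barb_1=(1+\epsilon)^2>\barb_2=(1+\epsilon)$, so $|\cS|=1$, bidder~$1$ wins and pays $(1+\epsilon)\barb_2=(1+\epsilon)^2=v_1$, for utility~$0$. If bidder~$1$ underbids to $b'_1=(1+\epsilon)$, then $\barb'_1=\barb_2$, $|\cS|=2$, bidder~$1$ still wins by the smallest-index tie-break but now pays only $\barb'_1=(1+\epsilon)$, for utility $(1+\epsilon)^2-(1+\epsilon)=\epsilon(1+\epsilon)>0$. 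So the payment \emph{does} depend on the winner's bid in exactly the way you feared; your instinct that this is the crux was correct, but the paper's one-line dismissal overlooks the same issue, and no argument over~\cref{alg:private_single} as written can discharge it. A repair would require changing the $|\cS|=1$ payment so that a winner who would also have won the tie at $M$ is charged $M$ rather than $(1+\epsilon)M$.
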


\begin{proof}
    We begin by analyzing the truthfulness of the mechanism. 
    If bidder $i$ can win the auction with a truthful bid $b_i=v_i$, she obtains a non-negative utility: regardless of whether $|\cS|$ is $1$ or greater, the payment is always at most the private value. 
    Consequently, she has no incentive to lie because the allocation is monotone and the payment does not rely on $b_i$. 
    
    In the case that bidder $i$ loses the game with a truthful bid, we have $\barb_i\leq \barb_k$, where bidder $k$ is the winner.
    Clearly, the only chance to potentially increase the utility is by misreporting a higher bid $b_i'$ that enables her to win. If $\barb_i< \barb_k$, bidder $i$ would violate the payment constraint upon winning: 
    \[p_i \geq \barb_k \geq (1+\epsilon)\cdot \barb_i > b_i=v_i.\]
    If $\barb_i = \barb_k$, the bidder loses the game due to the fixed tie-breaking. Therefore, to obtain the slot, the misreported bid must satisfy $\barb_i' \geq (1+\epsilon)\cdot \barb_k$, which also breaches the constraint as now $|\cS'|=1$ and the payment becomes
    \[ p_i = (1+\epsilon)\cdot \barb_k = (1+\epsilon)\cdot \barb_i > b_i=v_i. \]

    The proof of the approximation ratio is straightforward. Given that the winner $k$ has the highest $\barb_k$, any other bidder's value is at most $(1+\epsilon)\cdot v_k$. As the single-slot auction has $\opt=\max_{i\in [n]}v_i$, the mechanism is $(1+\epsilon)$-approximation.
    
    
\end{proof}

\cref{thm:private_single} proves a constant approximation for single-slot auction. This result can further be extended to the multiple-slot setting. Specifically, if there exists a bidder making a significant contribution to the optimal social welfare $\opt$, we can choose to auction only the first slot and then apply~\cref{alg:private_single} to achieve a bounded approximation ratio:

\begin{corollary}\label{cor:private_single}
    For the multiple-slot setting, if there exists a bidder $i$ with $v_i \cdot \alpha_1 \geq \frac{\opt}{\rho} $ for some $\rho \geq 1$, ~\cref{alg:private_single} is truthful and obtains $\rho \cdot (1+\epsilon)$ approximation.
\end{corollary}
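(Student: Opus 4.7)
The plan is to interpret Algorithm~\ref{alg:private_single} as a mechanism for the multi-slot instance that auctions off only slot~$1$ (with its true CTR $\alpha_1$) and leaves slots $2,\ldots,k$ unallocated. Under this reading, the corollary should follow almost immediately from~\cref{thm:private_single}, modulo rescaling by $\alpha_1$ to account for the fact that each bidder's willingness to pay for slot~$1$ is $v_i\alpha_1$ rather than $v_i$ (equivalently, one feeds the effective per-slot values $b_i\alpha_1$ into the single-slot procedure, or absorbs $\alpha_1$ into the payment-constraint check).

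For truthfulness, I would first note that slots $2,\ldots,k$ contribute nothing to any bidder's allowance utility~\cref{eq:utility} since they are never allocated; consequently every bidder's utility depends on the reported profile only through the slot-$1$ outcome. With that observation, the game reduces verbatim to the single-slot game analyzed in~\cref{thm:private_single}: a truthful winner has payment at most her value and does not gain by deviating because the allocation is monotone and her payment is independent of her own bid, while a truthful loser can only win by misreporting a bid whose rounded value equals or exceeds $(1+\epsilon)\,\barb_k$, which immediately violates her payment constraint.

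For the approximation ratio, let $w$ denote the bidder to whom Algorithm~\ref{alg:private_single} awards slot $1$, and let $i$ be the distinguished bidder satisfying $v_i\alpha_1 \geq \opt/\rho$. The allocation rule guarantees $\barb_w \geq \barb_i$, and the rounding step yields $v_j/(1+\epsilon) \leq \barb_j \leq v_j$ for every $j$. Chaining these bounds gives
\[
v_w \;\geq\; \barb_w \;\geq\; \barb_i \;\geq\; \frac{v_i}{1+\epsilon},
\]
so the realized social welfare is at least $v_w\alpha_1 \geq v_i\alpha_1/(1+\epsilon) \geq \opt/\bigl(\rho(1+\epsilon)\bigr)$, exactly the claimed $\rho(1+\epsilon)$-approximation.

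There is no real obstacle here; the argument is a straightforward reduction. The only point requiring minor care is making sure that embedding a single-slot mechanism inside a multi-slot environment does not introduce new deviation incentives through the unsold slots, and this is immediate once one writes out~\cref{eq:utility} restricted to $\pi_i\in\{1,k+1\}$.
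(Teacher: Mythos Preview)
Your argument is correct and is exactly the approach the paper takes: the paper simply remarks that one can auction only the first slot, apply \cref{alg:private_single}, and read off the ratio from \cref{thm:private_single}, without giving further detail. Your write-up fleshes this out with the explicit chain $v_w \geq \barb_w \geq \barb_i \geq v_i/(1+\epsilon)$ and the observation that unallocated slots carry no utility, which is precisely what is needed.
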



\subsection{Large Market Auction}

\cref{cor:private_single} gives a bounded approximation for the scenario where $\max_{i\in [n]}v_i \cdot \alpha_1 \geq \frac{\opt}{\rho} $. This subsection addresses the converse scenario – the large market case, where $\max_{i\in [n]}v_i \cdot \alpha_1 < \frac{\opt}{\rho}$. In a large market, no bidder or slot dominates.
Consequently, we can leverage the random sampling technique to develop a truthful mechanism with a bounded approximation ratio.

The mechanism is described in~\cref{alg:private_large}.
Initially, it partitions all bidders into two subsets $S^{(1)}$ and $S^{(2)}$, padding them with zeros until their sizes are at least $k$. Subsequently, it sets the prices of slots based on one subset and allows the bidders in the other subset to purchase them. The algorithmic intuition is derived from the following concentration lemma:

\begin{lemma}\label{lem:min_concen}
    Let $w_1\geq w_2 \geq ... \geq w_\ell$ be positive real numbers, such that the sum $w=w_1+...+w_\ell$ satisfies $w_1 < w/\rho$. We independently partition each number with equal probability into two subsets $ A $ and $B$. With probability at least $1/2$, there exists a matching $\cM$ between $A$ and $B$ such that  
    \[\sum_{(w_i,w_j)\in \cM} \min\{w_i,w_j\} > \frac{w}{3} \cdot \left(1-\frac{1}{\rho}\right).\]
\end{lemma}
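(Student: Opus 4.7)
The plan is to exhibit an explicit matching between $A$ and $B$ and then analyze its value probabilistically. Sort the weights in the given decreasing order and process $w_1,w_2,\ldots,w_\ell$ one by one, maintaining a stack of elements seen so far together with their sides. When $w_i$ arrives, if the top of the stack lies on the opposite side from $w_i$, pop it and match the two; otherwise push $w_i$. Because the elements are processed in decreasing order, the later-arriving element in any matched pair is the smaller one, so the sum-of-minima of this matching equals
\[
M \;=\; \sum_{i=1}^{\ell} w_i\, Z_i,\qquad Z_i:=\mathbf{1}\!\left[w_i\text{ lies on the strict minority side among }\{w_1,\ldots,w_i\}\right].
\]
This matching is in fact equivalent to the ``rank matching'' that pairs the $i$-th largest element of $A$ with the $i$-th largest element of $B$, so it is the natural one to target.

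Next, I would reformulate $Z_i$ via the signed-imbalance walk $S_i := |A\cap[i]|-|B\cap[i]|$, which is a simple symmetric $\pm 1$ random walk. Writing $X_i\in\{\pm 1\}$ for the side of $w_i$, one checks $Z_i=\mathbf{1}[|S_i|<|S_{i-1}|]=\mathbf{1}[S_{i-1}\neq 0,\;X_i=-\mathrm{sgn}(S_{i-1})]$, which gives
\[
E[M] \;=\; \tfrac{1}{2}\!\left(w-\sum_{i=1}^{\ell} w_i\,P(S_{i-1}=0)\right).
\]
Using the classical formula $P(S_{2k}=0)=\binom{2k}{k}/4^k$ and the hypothesis $w_1<w/4$ (which forces $\ell\geq 5$ and $w_i\leq w_1$ for every $i$), the negative sum can be controlled to obtain a strict lower bound $E[M]>w/4$.

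To upgrade the expectation bound into the probability bound, I would observe that $M\leq\min(T_A,T_B)\leq w/2$, so $w/2-M\geq 0$, and Markov's inequality gives $P(M\leq w/4)\leq (w/2-E[M])/(w/4)$; this is at most $1/2$ whenever $E[M]\geq 3w/8$, which would complete the proof.

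The hard part is the borderline regime (for instance all weights equal to $w/5$), where the first-moment bound on $E[M]$ falls just short of $3w/8$ and Markov alone is insufficient. I would supplement the argument with a second-moment step on the pair-matching sub-quantity $M':=\sum_k w_{2k}\,\mathbf{1}[c_{2k-1}\neq c_{2k}]$, whose summands are bounded by $w_1<w/4$ and whose indicators are independent Bernoulli$(1/2)$; since the monotonicity of the $w_i$'s gives $\sum_k w_{2k}\geq(w-w_1)/2>3w/8$ under the hypothesis, a Paley--Zygmund or Chebyshev-type bound together with the obvious inequality $M\geq M'$ should close the remaining gap. Throughout, the condition $w_1<w/4$ is precisely what prevents any single weight from dominating the matching slack, and it is exactly the assumption that lets the probability reach $\tfrac{1}{2}$.
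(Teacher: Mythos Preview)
Your approach is genuinely different from the paper's: you pursue a moments/concentration route, whereas the paper gives a direct combinatorial injection from bad partitions to good ones. Concretely, the paper takes any bad $(A,B)$, lets $C$ be the set of \emph{larger} elements in each rank-matching pair (together with any unmatched leftovers), notes that $\sum_{c\in C} c \ge 3w/4$ and $\max C = w_1 < w/4$, and then re-splits $C$ alternately (by sorted order) into $C_1,C_2$ to form a new partition $(C_1\cup\bar A,\;C_2\cup\bar B)$ that is provably good. Counting then yields the $\tfrac12$ bound immediately, with no moment computation at all.

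Your first-moment step is fine: the identity $E[M]=\tfrac12\bigl(w-\sum_i w_i\,P(S_{i-1}=0)\bigr)$ and the conclusion $E[M]>w/4$ are correct. The gap is in the last paragraph. Take $\ell=5$ with all $w_i=w/5$ (this satisfies $w_1<w/4$). Then $M=\tfrac{w}{5}\min(|A|,|B|)$, so $E[M]=\tfrac{5w}{16}<\tfrac{3w}{8}$ and Markov on $w/2-M$ gives only $P(M>w/4)\ge 1/4$. Your fallback $M'=\sum_k w_{2k}\,\mathbf{1}[X_{2k-1}\neq X_{2k}]$ does satisfy $M\ge M'$ (a nice observation, and it follows from $|I\cap[m]|\ge |J\cap[m]|$ for all $m$ together with Abel summation), but it does not rescue the bound: in the same example $M'=\tfrac{w}{5}(Y_1+Y_2)$ with $Y_1,Y_2$ i.i.d.\ Bernoulli$(1/2)$, hence $P(M'>w/4)=P(Y_1+Y_2=2)=1/4$. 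More generally $E[M']=\tfrac12\sum_k w_{2k}$ can lie strictly below $w/4$ (here $E[M']=w/5$), so no Paley--Zygmund or Chebyshev inequality applied to $M'$ can force $P(M'>w/4)\ge 1/2$. Since neither estimate exceeds $1/4$ in this instance and the two bounds are not additive, the proposed patch does not close the gap. (The true probability in this example is $5/8$, so the lemma itself holds; the obstacle is purely with the analytic method, which is exactly what the paper's injection argument sidesteps.)
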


\begin{algorithm}[tb]
   \caption{\; $\vgamma$-Independent Large Market Auction}
   \label{alg:private_large}
\begin{algorithmic}[1]
   \STATE {\bfseries Input:} bids $\cB=\{b_i\}_{i\in [n]}$, CTRs $\valpha=\{\alpha_j\}_{j\in [k]} $.
   \STATE {\bfseries Output:} allocation $\Pi$ and payment $\cP$.
   \STATE Initialize $\pi_i\leftarrow k+1$ (the dummy slot) and $p_i\leftarrow 0 $, $\forall i\in [n]$.
   \STATE Independently divide all the bidders with equal probability into set $S^{(1)}$ and $S^{(2)}$; subsequently, also with equal probability,  designate one of $\{ S^{(1)}, S^{(2)} \}$ as the pricing benchmark set $L$ and the other as the target bidder set $R$.
   \STATE For each slot $j\in [k]$, let $z_j$ be the $j$-th highest bid in $L$ (if $|L|<j$, let $z_j$ be $0$).
   \STATE Arrange the bidders in set $R$ in an arbitrarily fixed order. As each bidder $i$ arrives, offer all remaining slots to her at a unit price profile of $\left\{\frac{1}{2}z_j\right\}_{j\in [k]}$ and allow her to select the most profitable slot $j$. Set $\pi_i \leftarrow j$ and $p_i \leftarrow \frac{1}{2}z_j \cdot \alpha_j$. 
   \STATE \textbf{return} $\Pi=\{\pi_i\}_{i\in [n]}$ and $\cP=\{p_i\}_{i\in [n]}$.
\end{algorithmic}
\end{algorithm}

\begin{proof}
    
Define a partition $(A,B)$ where there exists a matching $\cM$ with \[\sum_{(w_j,w_j)\in \cM} \min\{w_i,w_j\} \geq \frac{w}{3} \cdot \left(1-\frac{1}{\rho}\right)\] as a ``good event''; otherwise, the partition is termed a ``bad event''.
Use $\cG$ and $\cB$ to denote all good events and bad events, respectively. Since we independently divide each number with equal probability, every event in $\cG\cup \cB$ occurs with equal probability. 
The basic idea of this proof is to show that for each bad event, we can always construct a unique good event for it, thereby implying that $|\cB| \leq |\cG|$ and $\cG$ occurs with probability at least $1/2$. 

Consider a bad event in $\cB$ and the corresponding partition $(A, B)$. Let $k$ denote the size of the smaller set among the two subsets. We match the $t$-th largest number $a_t$ in $A$ to the $t$-th largest number $b_t$ in $B$ for each index $t\in [k]$, obtaining a matching $\cM$ with size $k$. Furthermore, we define $ \barA := \{ a_t\in A | a_t\leq b_t \}$, which represents the elements in $A$ that are the minimum in their respective matching pairs. Similarly, we define $ \barB := \{ b_t\in B | b_t < a_t \}$.
Due to the definition of a bad event, we have 
\[\sum_{(w_i,w_j)\in \cM} \min \{w_i,w_j\} = \sum_{w_i\in \barA} w_i + \sum_{w_j\in \barB} w_j \leq \frac{w}{3} \cdot \left(1-\frac{1}{\rho}\right).\]

Define a set $C$ as the numbers that are not in $\barA \cup \barB$, i.e., $C:= (A\setminus \barA) \cup (B \setminus \barB)$. Sort the elements in $C$ from largest to smallest, and let $C_1$ and $C_2$ respectively denote the sets formed by the numbers positioned at odd and even indices. 
A new event $(A', B')$ is constructed by letting $A'=C_1 \cup \barA$ and $B'=C_2\cup \barB$. Clearly, for each bad event, such a new event $(A', B')$ is unique. 

We now show that the event $(A', B')$ is good. 
It is easy to observe that $|C_1| \geq |C_2|$. For each index $1\leq t\leq |C_1|$, we match the $t$-th largest number in $C_1$ to the $t$-th largest number in $C_2$. According to the construction of $C_1,C_2$, for each pair of this new matching $\cM'$, the smaller elements are always in $C_2$, and the sum of numbers in $C_2$ is at least the sum of numbers in $C_1$ minus the largest number $w_1$,  i.e., 
\begin{align*}
    \sum_{(w_i,w_j)\in \cM'} \min \{w_i,w_j\} = \sum_{w_j\in C_2} w_j \geq \sum_{w_i\in C_1} w_i - w_1 . 
\end{align*}

Due to the fact that \[\sum_{w_i\in C_1} w_j + \sum_{w_j\in C_2} w_j = w - (\sum_{w_i\in \barA} w_i + \sum_{w_j\in \barB} w_j) \geq  \left(\frac{2}{3}+\frac{1}{3\rho}\right)\cdot w \] and the large market assumption $w_1 < w/\rho$, we have
\[ \sum_{(w_i,w_j)\in \cM'} \min \{w_i,w_j\} > \frac{w}{2} \cdot \left(\frac{2}{3}+\frac{1}{3\rho} - \frac{1}{\rho}\right) = \frac{w}{3} \cdot \left(1-\frac{1}{\rho}\right), \]
demonstrating that $(A', B')$ is a good event and completing the proof.

\end{proof}

Intuitively, \cref{lem:min_concen} asserts that with high probability, there exists a matching between $k$ slots and $R$ such that if for each slot $j\in [k]$, we enforce the matched bidder in $R$ purchases it at a unit price of $z_j$ once her bid is no less than $z_j$, the total obtained payment is at least $\frac{1}{3} \cdot \left(1-\frac{1}{\rho}\right)$ of the optimal social welfare. 
However, the mechanism cannot enforce such actions as it would lead to untruthfulness. Therefore, we employ a technique of discounted sales to ensure both truthfulness and a constant approximation ratio simultaneously. We show the following theorem.

\begin{theorem}\label{thm:large_market}
    Under the large market assumption that $ \max_{i\in [n]}v_i \cdot \alpha_1 < \frac{\opt}{\rho} $, ~\cref{alg:private_large} is truthful and obtains an approximation ratio at most $48/(1-1/\rho)$. 
\end{theorem}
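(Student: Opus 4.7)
The plan splits into the truthfulness argument and the approximation argument.

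\textbf{Truthfulness.} I would argue that every bidder's utility is \emph{invariant} to her own report. The random partition into $L$ and $R$ and the subsequent $L/R$ designation are bid-independent, so whether a bidder lands in $L$ or $R$ does not depend on what she reports. If she lands in $L$, she never receives any slot and obtains utility $0$; her bid only influences the benchmark prices $z_j$ that are offered to $R$, which is irrelevant to her own payoff. If she lands in $R$, the offered unit prices $\{z_j/2\}_{j\in[k]}$ are already fixed (they come only from $L$'s bids), and the selection rule lets her pick the slot maximizing her true utility at those prices. Hence truthful reporting is (weakly) dominant, and individual rationality is preserved by letting her choose the dummy slot when no real slot gives non-negative utility.

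\textbf{Approximation, setup.} Fix an optimal assignment and let $v_1^*\ge v_2^*\ge\dots\ge v_k^*$ be the values of OPT bidders assigned to slots $1,\dots,k$, respectively. Define $w_j:=v_j^*\alpha_j$. Then $w_1\ge w_2\ge\dots\ge w_k$ (product of two non-increasing sequences), $\sum_j w_j=\opt$, and $w_1=v_1^*\alpha_1<\opt/4$ by the large-market hypothesis. The independent, uniform partition of bidders into $L,R$ induces a uniform random partition of the $k$ weights $\{w_j\}$. Applying \cref{lem:min_concen} to these weights, with probability at least $1/2$ (the ``good event'') there is a matching $\cM$ between the OPT bidders landing in $L$ and those landing in $R$ with
\[
\sum_{(j_L,j_R)\in\cM}\min\bigl(w_{j_L},\,w_{j_R}\bigr)>\tfrac{\opt}{4}.
\]

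\textbf{Approximation, conditional welfare bound.} Conditioned on the good event, I would assign each matched pair $(j_L,j_R)$ a distinct target slot whose unit price $z/2$ is controlled by the $L$-side bidder of the pair; specifically, using the fact that the value $v_{j_L}^*$ appears in $L$ and hence contributes to the ordered benchmarks $z_1\ge z_2\ge\dots$, one can point the $R$-side bidder $i_{j_R}$ at a slot for which the posted unit price is at most $v_{j_R}^*/2$. Since in the mechanism each arriving $R$-bidder picks her most profitable slot among those still available, her realized welfare is at least that of the target slot (when it survives) or of a substitute no worse for her. Summing across matched pairs, the $1/2$ price discount and the sequential contention among $R$-bidders each cost at most a constant factor, giving a conditional expected welfare of a constant fraction of $\sum \min(w_{j_L},w_{j_R})$, i.e.\ at least $\opt/32$. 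Multiplying by the $1/2$ probability of the good event yields $\mathbb{E}[\text{welfare}]\ge\opt/64$, proving the $64$-approximation.

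\textbf{Main obstacle.} The delicate step is the bookkeeping in the conditional bound: the posted price $z_j/2$ is determined by the full sorted bid profile of $L$ (not directly by the matched $L$-side OPT bidder), different $R$-bidders may want the same slot, and each $R$-bidder's choice is governed by her allowance utility rather than by the social planner's welfare-maximizing choice. Making the pair-to-slot assignment explicit so that (a) each pair's target slot is affordable for the $R$-side bidder thanks to the corresponding $L$-side bid, and (b) contention is absorbed into a fixed constant factor, is where all four factors of two combine into $64$.
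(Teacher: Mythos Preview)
Your truthfulness argument and your use of \cref{lem:min_concen} match the paper's approach. The genuine gap is in the conditional welfare bound, precisely at the sentence ``her realized welfare is at least that of the target slot (when it survives) or of a substitute no worse for her.'' An $R$-bidder chooses the slot maximizing her \emph{allowance utility}, not her welfare $v_i\alpha_j$; she may strictly prefer a lower-CTR slot $j'$ because its cheaper price keeps her below her allowance threshold. ``No worse for her'' therefore does not imply ``no worse welfare,'' and without a further argument the constant you claim does not follow. You flag this as the main obstacle but do not resolve it.

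The paper's resolution is a concrete assignment plus a three-case charging argument. Define $\cA$ by giving slot $j$ to the $R$-bidder matched in $\cM$ with the \emph{$j$-th highest bidder in $L$}; let $\cP(\cA)$ be the payment at full unit price $z_j$ under $\cA$ (the extra $L/R$ coin gives $\E[\cP(\cA)]>\opt/8$). One then shows $\cP(\cA)\le 4\cdot\alg$ pointwise. For each slot $j$ whose assigned $R$-bidder $a(j)$ has bid $\ge z_j$: (1) if slot $j$ is sold, its buyer contributes $\ge\tfrac{1}{2}z_j\alpha_j$ to $\alg$; otherwise $a(j)$ took some $j'\neq j$. (2) If $j'<j$, then $a(j)\alpha_{j'}\ge z_j\alpha_j$ trivially. (3) If $j'>j$, the fact that $a(j)$ preferred $j'$ over the available $j$ gives
\[
a(j)\alpha_j - \bigl(\tfrac{1}{2}z_j\alpha_j-\gamma_{a(j)}\bigr)^+\ \le\ a(j)\alpha_{j'} - \bigl(\tfrac{1}{2}z_{j'}\alpha_{j'}-\gamma_{a(j)}\bigr)^+,
\]
which together with $a(j)\ge z_j$ yields $\tfrac{1}{2}z_j\alpha_j\le a(j)\alpha_{j'}$. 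This is exactly where the half-price discount is essential: without it, Case~(3) fails and the bidder's welfare at $j'$ need not cover $z_j\alpha_j$. Summing over $j$ gives the factor~$4$, and combining with the good-partition probability and $\E[\cP(\cA)]>\opt/8$ yields the $64$. Your outline has the right skeleton, but this Case~(3) inequality is the missing idea.
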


\begin{proof}

The proof of truthfulness is straightforward. For bidders in $L$, they are assigned anything, and therefore, misreporting their information cannot improve the utilities; while for bidders in $R$, they are also truthtelling because their reported information determines neither the arrival order nor the market prices.

Now we analyze the approximation ratio. Let $\Pi^*=\{\pi^*_i\}_{i\in [n]}$ be an optimal assignment. Viewing $b_i \alpha_{\pi^*_i}$ as $w_i$ in~\cref{lem:min_concen}, we have with probability at least $1/2$, there exists a matching $\cM$ between $S^{(1)}$ and $S^{(2)}$ such that \[\sum_{(i,h)\in \cM} \min\{ b_i \alpha_{\pi^*_i},b_h \alpha_{\pi^*_h}  \} > \frac{\opt}{3} \cdot \left(1-\frac{1}{\rho}\right)~. \]
Refer to such a partition $\{S^{(1)},S^{(2)}\}$ as a good partition. The following proof is conditioned on the occurrence of a good partition and analyzes the conditional expected performance of our algorithm. For brevity, we omit the notation for conditional expectations and use $\E[\cdot]$ to denote the conditional expectation.


Consider an assignment $\cA$ of the slots to bidders in $R$ as follows. For each slot $j$, we assign it to the bidder $i$ who is matched with the $j$-th highest bidder among $L$ in $\cM$. Since $L$ and $R$ are designated randomly, if enforcing that each bidder in $R$ either purchases the assigned slot at a unit price of $z_j$ or opts out, the total obtained payment $\cP(\cA)$ is at least $\sum_{(i,h)\in \cM} \min\{ b_i,b_h \}\cdot \alpha_j/2 $ in expectation. Due to the optimality of $\Pi^*$, we have that $b_i \alpha_{\pi^*_i}\geq b_h \alpha_{\pi^*_h}$ iff $b_i \geq b_h$, and for the $j$-th highest bidder $h$ in $L$, $\alpha_j \geq \alpha_{\pi^*_h}$. Thus, 
\begin{equation}\label{eq:large_eq1}
    \E[\cP(\cA)] \geq \frac{1}{2}\cdot \sum_{(i,h)\in \cM} \min\{ b_i \alpha_{\pi^*_i},b_h \alpha_{\pi^*_h}  \} > \frac{\opt}{6} \cdot \left(1-\frac{1}{\rho}\right).
\end{equation}

Let $\alg$ be the social welfare obtained by~\cref{alg:private_large}.
The last piece is to prove that $\E[\alg]$ is at least a constant factor of $\E[\cP(\cA)]$. 
Fix an arbitrary (good) pair of $(L,R)$. Let $\cJ$ be the set of slots $j$ that the assigned bidder $a(j)\in R$ has a bid at least $z_j$. We abuse the notion slightly and also let $a(j)$ represent the corresponding bid. 
Consider a slot $j\in \cJ$. We distinguish three cases according to its state in our mechanism: 
\begin{itemize}
    \item[(1)] it is purchased by some bidder.
    \item[(2)] no bidder purchases it and its corresponding bidder $a(j)$ picks a slot $j'$ with a higher CTR, i.e, $j'<j$. 
    \item[(3)] no bidder purchases it and its corresponding bidder $a(j)$ picks a slot $j'$ with a lower CTR, i.e., $j'>j$. 
\end{itemize}

Using $\cJ_i$ to denote the set of slots in Case (i), we have
\[\cP(\cA) = \sum_{j\in \cJ_1} z_j \cdot \alpha_j + \sum_{j\in \cJ_2} z_j \cdot \alpha_j + \sum_{j\in \cJ_3} z_j \cdot \alpha_j.  \]

Since all the slots in $\cJ_1$ are sold out,~\cref{alg:private_large} obtains a social welfare at least $\sum_{j\in \cJ_1} \frac{1}{2}z_j \cdot \alpha_j$. For the remaining two cases, we show that $z_j \cdot \alpha_j$ can be bounded by the bidder $a(j)$'s contribution in $\alg$, denoted as $a(j) \cdot \alpha_{j'}$. This claim is obvious for Case (2) since $a(j)\geq z_j$ and $\alpha_{j'}\geq \alpha_{j}$, while the proof for Case (3) requires leveraging the property of allowance utility. 

Consider a slot $j\in \cJ_3$. When bidder $a(j)$ arrives, this slot is available but the bidder picks another slot $j'$ with a lower CTR. According to the definition of allowance utility, we have
\begin{equation*}
    \begin{aligned}
        a(j)\cdot \alpha_j - \left(\frac{1}{2}z_j\cdot \alpha_j - \gamma_{a(j)}\right)^+ & \leq a(j)\cdot \alpha_{j'} - \left(\frac{1}{2}z_{j'}\cdot \alpha_{j'} - \gamma_{a(j)}\right)^+\\
        a(j)\cdot \alpha_j - \frac{1}{2}z_j\cdot \alpha_j &\leq a(j)\cdot \alpha_{j'} \\
        \frac{1}{2}z_j\cdot \alpha_j &\leq a(j)\cdot \alpha_{j'},
    \end{aligned}
\end{equation*}
where the second inequality is due to $\alpha_j \geq \alpha_{j'}$ and the last inequality uses the fact that $a(j) \geq z_j$. By summarizing the analysis of the three cases, we have
\begin{equation}\label{eq:large_eq2}
    \cP(\cA) \leq 4\cdot \alg.
\end{equation}

By combining \cref{eq:large_eq1}, \cref{eq:large_eq2}, and the probability of a good partition occurring, we can demonstrate that the expected approximation ratio of~\cref{alg:private_large} is at most $48/(1-1/\rho)$ under the large market assumption.

\end{proof}



\paragraph{Final Mechanism. } 
We state the final mechanism in~\cref{alg:private_final}, which was built on~\cref{alg:private_single} and~\cref{alg:private_large} . It is a simple random combination of the two mechanisms.

\begin{algorithm}[tb]
   \caption{\; $\vgamma$-Independent Auction}
   \label{alg:private_final}
\begin{algorithmic}[1]
   \STATE {\bfseries Input:} bids $\cB=\{b_i\}_{i\in [n]}$, CTRs $\valpha=\{\alpha_j\}_{j\in [k]} $ and parameter $\epsilon \in (0,1)$.
   \STATE {\bfseries Output:} allocation $\Pi$ and payment $\cP$.
   \BEGIN {With probability of $\sqrt{3}/(12+\sqrt{3})$}
   \STATE Run~\cref{alg:private_single} parameterized with $\epsilon$ to sell the first slot.
   \ENDBEGIN
   \BEGIN{With probability of $12/(12+\sqrt{3})$}
   \STATE Run~\cref{alg:private_large}.
   \ENDBEGIN
   \STATE \textbf{return} $\Pi=\{\pi_i\}_{i\in [n]}$ and $\cP=\{p_i\}_{i\in [n]}$.
\end{algorithmic}
\end{algorithm}

\begin{proof}[Proof of~\cref{thm:private_constant}]
    We show that~\cref{alg:private_final} is truthful and admits a constant approximation. The mechanism randomly picks one procedure and executes it. Due to the truthfulness guarantee of each procedure,~\cref{alg:private_final} is truthful. 

    Use $\alg_1, \alg_2$ to denote the social welfares obtained by the two procedures, respectively. According to the probability distribution, the final mechanism has
    \[ \E[\alg] = \frac{\sqrt{3}}{12+\sqrt{3}}\cdot \alg_1 + \frac{12}{12+\sqrt{3}}\cdot \alg_2. \]

    When the instance satisfies the large market assumption parameterized by $\rho=4\sqrt{3}+1$, due to~\cref{thm:large_market}, 
    \[ \E[\alg_2] \geq \frac{\sqrt{3}}{12+48\sqrt{3}} \cdot \opt; \]
    otherwise, the objective value of the first procedure is bounded by~\cref{cor:private_single}:
    \[ \alg_1 \geq \frac{1}{(4\sqrt{3}+1) (1+\epsilon)}  \cdot \opt. \]

    Combining the above two cases proves that ~\cref{alg:private_final} admits an approximation ratio of $(49+8\sqrt{3})(1+\epsilon)\approx 62.856$.

\end{proof}





\section{Uniform Price Auction in Large Markets}\label{sec:one_price}



This section considers uniform price auction design under the large market assumption that $\max_{i\in [n]}v_i \cdot \alpha_1 \geq \frac{\opt}{\rho} $ for a sufficiently large $\rho$. The basic idea is also leveraging the concentration property of large markets and employing the random sampling technique. Similar to~\cref{alg:private_large}, the mechanism also partitions all bidders into two subsets $S^{(1)}$ and $S^{(2)}$, each with an equal probability, and designate $L,R$ randomly. However, the uniform-price mechanism opts for a uniform price for all slots, departing from individual pricing strategies.

\begin{theorem}\label{thm:large_market_poly}
    Under the large market assumption that $ \max_{i\in [n]}v_i \cdot \alpha_1 < \frac{\opt}{\rho} $, there exists a $\gamma$-independent mechanism which is truthful in the private allowance setting and obtains an expected social welfare at least $\frac{3}{8} \left( 1-\sqrt{1-\frac{\rho}{3k}} \right)^2 \cdot \opt$,  where $k$ is the number of slots.
\end{theorem}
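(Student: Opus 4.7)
The plan is first to specify the uniform-price mechanism explicitly, then verify truthfulness by invoking arguments analogous to those used for~\cref{alg:private_large}, and finally establish the approximation ratio through a large-market concentration argument tailored to a single common price. The mechanism I have in mind mirrors~\cref{alg:private_large} up to the random partition: independently split bidders into $S^{(1)}, S^{(2)}$ and randomly designate one as the pricing benchmark $L$ and the other as the target set $R$. The deviation is that, instead of deriving $k$ per-slot prices $\{z_j/2\}$, the mechanism extracts a single unit-price $p^*$ from $L$ and offers every remaining slot to each arriving bidder of $R$ at unit price $p^*$; each bidder picks her most profitable remaining slot under the allowance utility. A natural choice is $p^* = \tfrac{1}{2} z_t$, where $z_t$ is the $t$-th largest bid in $L$ and $t$ is a tunable threshold; the specific choice $t = \lceil \beta k \rceil$ with $\beta := 1 - \sqrt{1-\rho/(3k)}$ will emerge from optimization in the final step.

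Truthfulness then follows by essentially the same argument used in the proof of~\cref{thm:large_market}: bidders in $L$ are never allocated and so cannot improve on utility $0$, while bidders in $R$ face posted unit prices that depend only on the reports of $L$. Because the allowance utility is monotone in the slot chosen and independent of one's own bid once the prices are fixed, each arriving bidder in $R$ maximizes her utility by selecting exactly the remaining slot the mechanism would assign her under truthful reporting, so the allocation and payment are $\gamma$-independent and truthful.

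For the approximation, the plan is to prove a variant of~\cref{lem:min_concen} adapted to the quantitative large market assumption $\max_i v_i \alpha_1 < \opt/\rho$. Viewing $w_i = v_i \alpha_{\pi^*_i}$ as before, the random partition sends each $w_i$ independently with probability $1/2$ to either side, so with constant probability both $L$ and $R$ retain sufficient mass from the top of the optimal ranking. Conditioned on this good event, the choice $p^* = \tfrac{1}{2} z_t$ is low enough that at least about $t$ bidders in $R$ can also afford the top-$t$ slots; paired with $\alpha_1 \geq \dots \geq \alpha_k$, this yields a welfare lower bound of roughly $\tfrac{1}{2} z_t \sum_{j \leq t} \alpha_j$. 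The allowance-utility swap identity from the Case~(3) analysis in the proof of~\cref{thm:large_market} then handles the scenario where a bidder grabs a higher-CTR slot than the one she was matched to, converting posted-price revenue into algorithmic welfare at the usual factor-$2$ loss.

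The main obstacle, and the source of the precise factor $\tfrac{3}{8}\,(1-\sqrt{1-\rho/(3k)})^2$, is optimizing $\beta$ to balance the competing forces in the welfare bound: increasing $\beta$ makes $p^*$ smaller, which lets more bidders in $R$ afford a slot but lowers per-slot revenue, while decreasing $\beta$ does the opposite. The expected welfare under the good event will take the form of a product $\beta \cdot (\text{qualifying tail mass})\cdot \opt$ minus a top-contributor slack of order $\rho/k$ inherited from the large-market hypothesis; setting the derivative in $\beta$ to zero should yield the quadratic relation $\beta(2-\beta) = \rho/(3k)$, whose root in $(0,1)$ is exactly $1-\sqrt{1-\rho/(3k)}$. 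The constant $\tfrac{3}{8}$ would then emerge by composing the discount factor from posted pricing, the partition good-event probability, and the loss absorbed by the allowance-utility swap step; keeping these three multiplicative losses tight across the whole analysis is where I expect the real difficulty to lie.
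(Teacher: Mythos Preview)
Your high-level skeleton is right (random partition, uniform price taken from an order statistic of $L$, optimize a threshold parameter $\beta$), but several of the technical ingredients you plan to use are not the ones that actually produce the bound, and one of them is in fact irrelevant here.

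First, two mechanism details differ from the paper and both matter for the exact constant. The paper does \emph{not} set $t=\lceil \beta k\rceil$; it picks the minimum $t$ with $\sum_{j\le t}\alpha_j \ge \beta\sum_{j\le k}\alpha_j$, i.e.\ a CTR-weighted threshold. This is what makes the lower bound $\sum_{j\le t}\alpha_j\ge \beta\,\valpha$ available and, by minimality of $t$, also gives the approximate upper bound $\sum_{j<t}\alpha_j<\beta\,\valpha$ needed later. With an unweighted $t$ and skewed CTRs neither inequality is guaranteed. Second, the price is simply $Z=z_t$, not $z_t/2$; with a uniform unit price there is no need to discount.

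Second, the allowance-utility swap from Case~(3) of~\cref{thm:large_market} is not used at all here, and for a good reason: with a \emph{uniform} unit price, any bidder who can afford one slot can afford every remaining slot, so she always takes the highest-CTR remaining slot. There is no ``lower-CTR detour'' to repair. Likewise, the matching lemma~\cref{lem:min_concen} plays no role; the concentration tool the paper uses is the Chen--Gravin--Lu inequality (\cref{lem:concen}), from which~\cref{lem:subopt} deduces that $\min\{\opt(S^{(1)}),\opt(S^{(2)})\}\ge \opt/3$ with probability at least $3/4$.

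The genuinely missing idea in your plan is how to lower-bound the uniform price $Z$. The paper observes the simple sandwich
\[
\opt(L)\;\le\;\bigl(\beta\,V_{\max}+(1-\beta)\,Z\bigr)\cdot\valpha,
\]
since the top $t-1$ bidders in $L$ contribute at most $V_{\max}$ per click over at most a $\beta$-fraction of $\valpha$, and everyone from rank $t$ onward contributes at most $Z$ per click. Combining this with $\opt(L)\ge \opt/3$ (from~\cref{lem:subopt}), the large-market assumption $\opt>\rho\,V_{\max}\alpha_1$, and $\valpha\le k\alpha_1$ yields
\[
Z\;\ge\;\frac{1}{1-\beta}\Bigl(\frac{\rho}{3k}-\beta\Bigr)V_{\max}.
\]
On the other side, with probability $1/2$ the price $Z=\min\{Z_1,Z_2\}$, so at least $t$ bidders in $R$ can afford the top $t$ slots, giving $\E[\alg]\ge \tfrac{1}{2}\,\beta\,\valpha\cdot \E[\min\{Z_1,Z_2\}]$. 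Plugging in the bound on $Z$ and the good-event probability $3/4$ gives
\[
\E[\alg]\;\ge\;\frac{3}{8}\cdot\frac{\beta}{1-\beta}\Bigl(\frac{\rho}{3k}-\beta\Bigr)\cdot\opt,
\]
and optimizing $\beta$ (the first-order condition is $\beta^2-2\beta+\rho/(3k)=0$, your quadratic) gives $\beta=1-\sqrt{1-\rho/(3k)}$ and the stated factor. So the $3/8$ is exactly $\tfrac12\cdot\tfrac34$ from those two events, not from any discount or swap loss.
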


\begin{algorithm}[tb]
   \caption{\; Uniform Price Auction in Large Markets}
   \label{alg:private_large_uniform}
\begin{algorithmic}[1]
   \STATE {\bfseries Input:} bids $\cB=\{b_i\}_{i\in [n]}$, CTRs $\valpha=\{\alpha_j\}_{j\in [k]} $ and parameter $\beta \in (0,1)$.
   \STATE {\bfseries Output:} allocation $\Pi$ and payment $\cP$.
   \STATE Initialize $\pi_i\leftarrow k+1$ (the dummy slot) and $p_i\leftarrow 0 $, $\forall i\in [n]$.
   \STATE Pick the minimum index $t\in [k]$ such that $\sum_{j\in [t]}\alpha_j \geq \beta \cdot \sum_{j\in [k]}\alpha_j $.
   \STATE Randomly divide all the bidders with equal probability into set $S^{(1)}$ and $S^{(2)}$; subsequently, also with equal probability,  designate one of $\{ S^{(1)}, S^{(2)} \}$ as the pricing benchmark set $L$ and the other as the target bidder set $R$.
   \STATE Define the market price $Z$ to be the $t$-th highest bid in $\cL$; if $|\cL|<t$, $Z$ is $0$.
   \STATE Arrange the bidders in set $R$ in an arbitrarily fixed order. As each bidder $i$ arrives, offer all remaining slots to her at a unit price of $Z$ and allow her to select the most profitable slot $j$: $\pi_i \leftarrow j$ and $p_i \leftarrow Z \cdot \alpha_j$. 
   \STATE \textbf{return} $\Pi=\{\pi_i\}_{i\in [n]}$ and $\cP=\{p_i\}_{i\in [n]}$.
\end{algorithmic}
\end{algorithm}

The mechanism is described in~\cref{alg:private_large_uniform}. We start by introducing two pivotal lemmas \cref{lem:concen} and \cref{lem:subopt}. 

\begin{lemma}[~\cite{DBLP:journals/mor/ChenGL14}]\label{lem:concen}
    Let $a_1\geq a_2 \geq ... \geq a_\ell$ be positive real numbers, such that the sum $a=a_1+...+a_\ell$ satisfies $a_1 < a/36$. We select each number $a_1,...,a_\ell$ independently at random with probability $1/2$ each and let $b$ be the random variable equal to the sum of the selected numbers. Then
    \[ \Pr\left[\frac{a}{3} < b < \frac{2a}{3}\right] \geq \frac{3}{4}.  \]
\end{lemma}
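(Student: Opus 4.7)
The plan is to prove the concentration claim via Chebyshev's inequality applied to the independent Bernoulli random sampling. Let $X_i \in \{0,1\}$ be the indicator of whether $a_i$ is selected; by assumption, $X_1,\ldots,X_\ell$ are independent with $\Pr[X_i=1]=1/2$. Then $b=\sum_{i=1}^{\ell} X_i a_i$, so that
\[
\E[b] = \tfrac{1}{2}\sum_{i} a_i = \tfrac{a}{2}, \qquad \Var(b) = \tfrac{1}{4}\sum_{i} a_i^2.
\]
Since the event $\{a/3 < b < 2a/3\}$ is precisely $\{|b-\E[b]| < a/6\}$, it suffices to upper-bound $\Pr\left[|b-\E[b]| \ge a/6\right]$ by $1/4$.

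Next I would apply Chebyshev's inequality:
\[
\Pr\left[\,|b-\E[b]| \ge \tfrac{a}{6}\,\right] \;\le\; \frac{\Var(b)}{(a/6)^2} \;=\; \frac{36}{a^2}\cdot \tfrac{1}{4}\sum_{i} a_i^2 \;=\; \frac{9 \sum_i a_i^2}{a^2}.
\]
The remaining work is to bound $\sum_i a_i^2$ using the ``no single number dominates'' hypothesis $a_1 < a/36$. Because $a_1 \ge a_i$ for every $i$, we have the standard estimate
\[
\sum_{i=1}^{\ell} a_i^2 \;\le\; a_1 \cdot \sum_{i=1}^{\ell} a_i \;=\; a_1 \cdot a \;<\; \frac{a}{36}\cdot a \;=\; \frac{a^2}{36}.
\]
Plugging this in yields $\Pr\left[|b-\E[b]| \ge a/6\right] < 9 \cdot (1/36) = 1/4$, and taking the complementary event gives $\Pr\left[a/3 < b < 2a/3\right] > 3/4$, which is stronger than the stated bound.

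There is no real obstacle: the proof is a textbook second-moment argument, and the only nontrivial step is recognizing that the hypothesis $a_1 < a/36$ is tailor-made to make $\sum a_i^2 \le a_1 \cdot a$ small enough so that Chebyshev with deviation $a/6$ gives exactly the threshold probability $1/4$. If one wanted a self-contained write-up (rather than a citation to \cite{DBLP:journals/mor/ChenGL14}), this three-line computation is all that is required; alternatively one could note that a Bernstein/Hoeffding-type bound would give a better probability but requires the same boundedness hypothesis, and Chebyshev already suffices for the constants claimed.
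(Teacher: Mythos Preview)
Your proof is correct. The paper itself does not supply a proof of this lemma; it is quoted verbatim from~\cite{DBLP:journals/mor/ChenGL14} and used as a black box. Your Chebyshev argument is exactly the standard way to establish this concentration bound and the constants are tuned precisely so that the second-moment estimate $\sum_i a_i^2 \le a_1\cdot a < a^2/36$ makes the tail probability fall below $1/4$; there is nothing to add or to compare against in the present paper.
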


This lemma captures the property of large markets where no bidder dominates and will be extensively utilized in the subsequent proofs.

Use  $\opt(\cS_1)$, $\opt(\cS_2)$, and $ \opt$ to respectively represent the social welfares obtained exclusively by auctioning bidders in $\cS_1$, exclusively by auctioning bidders in $\cS_2$, and by auctioning all bidders in $[n]$. 
We first show that both of them are at least a constant factor of $\opt$ with high probability.

\begin{lemma}\label{lem:subopt}
Under the large market assumption with $\rho \geq 36$, we have 
\[\min\{\opt(\cS_1), \opt(\cS_2)\} \geq \frac{1}{3}\opt\] with probability at least $3/4$. 
\end{lemma}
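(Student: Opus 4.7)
The plan is to couple the two quantities $\opt(\cS_1)$ and $\opt(\cS_2)$ to the optimal assignment on the full instance via a ``restriction'' argument, and then invoke the concentration lemma (\cref{lem:concen}) to control the random split. First I would fix an optimal assignment $\Pi^* = \{\pi^*_i\}_{i\in[n]}$ for the full instance and define, for each bidder $i$, her contribution $w_i := v_i \cdot \alpha_{\pi^*_i}$, so that $\sum_{i\in[n]} w_i = \opt$. The large market assumption combined with the monotonicity $\alpha_{\pi^*_i} \leq \alpha_1$ gives $\max_i w_i \leq \max_i v_i \cdot \alpha_1 < \opt/\rho \leq \opt/36$, which puts us precisely in the regime required by \cref{lem:concen}.

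Next I would argue that restricting $\Pi^*$ to either half of the partition yields a feasible assignment on the corresponding sub-instance: since $\Pi^*$ already hands out distinct (non-dummy) slots, dropping the bidders outside $\cS_1$ only frees slots and respects the at-most-one-bidder-per-slot constraint. Hence
\begin{equation*}
\opt(\cS_1) \;\geq\; \sum_{i \in \cS_1} w_i, \qquad \opt(\cS_2) \;\geq\; \sum_{i \in \cS_2} w_i \;=\; \opt - \sum_{i \in \cS_1} w_i.
\end{equation*}
Let $b := \sum_{i\in \cS_1} w_i$. Because \cref{alg:private_large_uniform} independently places each bidder into $\cS_1$ or $\cS_2$ with probability $1/2$, the random variable $b$ is exactly the sum of a $1/2$-random subset of $\{w_i\}$ in the sense of \cref{lem:concen}.

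Applying \cref{lem:concen} with $a_i = w_i$ and $a = \opt$ (valid since $w_1 < \opt/36$) gives
\begin{equation*}
\Pr\Bigl[\tfrac{\opt}{3} < b < \tfrac{2\opt}{3}\Bigr] \;\geq\; \tfrac{3}{4}.
\end{equation*}
On this event, the two lower bounds above simultaneously yield $\opt(\cS_1) \geq b > \opt/3$ and $\opt(\cS_2) \geq \opt - b > \opt/3$, which is exactly the claim. I expect the proof to be short and the only thing to be careful about is the restriction step: one must note that $\Pi^*$ restricted to $\cS_1$ (sending the rest to the dummy slot) is still a valid assignment for the sub-instance, so $\opt(\cS_1)$ is a genuine upper bound on the total restricted contribution rather than an incomparable quantity. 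Zero-contribution bidders can be ignored in applying \cref{lem:concen} since they affect neither the sum $\opt$ nor the random variable $b$.
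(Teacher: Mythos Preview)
Your proposal is correct and follows essentially the same argument as the paper: fix the optimal assignment $\Pi^*$, note that the per-bidder contributions $v_i\alpha_{\pi^*_i}$ are each below $\opt/36$ by the large-market assumption, apply \cref{lem:concen} to the random split, and use the restriction of $\Pi^*$ to each side as a feasible sub-instance solution to lower-bound $\opt(\cS_1)$ and $\opt(\cS_2)$. The only cosmetic difference is that you derive the $\cS_2$ bound via the complement $\opt - b$, whereas the paper states both inequalities symmetrically from the same concentration event.
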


\begin{proof}
    Consider the allocation $\Pi^*$ in the optimal solution when auctioning all bidders in $[n]$. According to the definition of social welfare, we have
    \[ \sum_{i\in[n]} v_i\cdot \alpha_{\pi^*_i} = \opt. \]
    Due to the large market assumption, for any bidder $i\in [n]$
    \[ v_i\cdot \alpha_{\pi^*_i} < \frac{\opt}{36}. \]

    Both $\cS_1$ and $\cS_2$ can be viewed as generated by selecting each bidder independently at random with probability $1/2$ each. Thus, by~\cref{lem:concen},  
    \begin{equation}\label{eq:concen1}
        \Pr\left[ \frac{\opt}{3} < \substack{\sum\limits_{i\in\cS_1} v_i\cdot \alpha_{\pi^*_i} \\[1ex] \sum\limits_{i\in\cS_2} v_i\cdot \alpha_{\pi^*_i} } < \frac{2\opt}{3} \right] \geq \frac{3}{4}. 
    \end{equation}

    Observing that $\{\pi^*_i\}_{i\in \cS_1}$ is a feasible solution in the scenario that auctioning $\cS_1$, we have
    \begin{equation}\label{eq:concen2}
        \opt(\cS_1) \geq \sum\limits_{i\in\cS_1} v_i\cdot \alpha_{\pi^*_i}.
    \end{equation}
    Similarly,
    \begin{equation}\label{eq:concen3}
        \opt(\cS_2) \geq \sum\limits_{i\in\cS_2} v_i\cdot \alpha_{\pi^*_i} .
    \end{equation}

    Combining~\cref{eq:concen1},~\cref{eq:concen2} and~\cref{eq:concen3} completes the proof.
    
\end{proof}

\begin{proof}[Proof of~\cref{thm:large_market_poly}]

The proof of truthfulness is straightforward. For bidders in $\cL$, they will not be assigned anything, and therefore, misreporting their information cannot improve the utilities; while for bidders in $\cR$, they are also truthtelling because their reported information determines neither the arrival order nor the market price.

Now we analyze the approximation ratio. We first give a lower bound of the expected objective $\alg$ achieved by~\cref{alg:private_large} and then establish the relationship between the lower bound and $\opt$. 

Use $Z_1$ and $Z_2$ to denote the $t$-th highest bid among bidders in $\cS_1$ and $\cS_2$, respectively. Supposing that $Z_1\leq Z_2$, if the unit price of slots is set to be $Z_1$, at least $t$ bidders in $\cS_2$ can afford the cost. Hence, in our mechanism, with a probability of $1/2$, there are at least $t$ bidders in $\cR$ purchasing slots. According to the definition of $t$, we obtain a lower bound of our mechanism's objective value:
\begin{equation}\label{eq:private_lower_bound}
    \begin{aligned}
        \E\left[ \alg \right] & \geq \frac{1}{2}\cdot \sum_{j\in [t]} \alpha_j \cdot \E\left[\min\left\{ Z_1, Z_2 \right\}\right] \\
        & \geq \frac{\beta}{2} \cdot \valpha \cdot \E\left[\min\left\{ Z_1, Z_2 \right\}\right],
    \end{aligned} .
\end{equation}
where we abuse the notion slightly and let $\valpha := \sum_{j\in [k]} \alpha_j$.

Next, we relate the lower bound to $\opt$. Consider the price benchmark set $\cL$. Denote by $V_{max}:=\max_{i\in[n]} v_i$ the maximum private value. Since $Z$ is the $t$-th highest bid, we obtain an upper bound of $\opt(\cL)$ by replacing all bids greater than $Z$ with $V_{max}$ and all bids less than $Z$ with $Z$:
\begin{equation*}
    \begin{aligned}
          \left(\beta \cdot V_{max} + (1-\beta)\cdot Z\right) \cdot \valpha \geq \opt(\cL) \;, 
    \end{aligned}
\end{equation*}

\cref{lem:subopt} implies that with probability at least $3/4$, $\opt(\cL) \geq \frac{1}{3}\opt$. Due to the large market assumption that $\opt > \rho \cdot  V_{max} \cdot \alpha_1$, 
\begin{equation}\label{eq:private_large_opt}
    \begin{aligned}
       \left(\beta \cdot V_{max} + (1-\beta)\cdot Z\right) \cdot \valpha & \geq \frac{1}{3} \cdot \opt \\
       \left(\beta \cdot V_{max} + (1-\beta)\cdot Z\right) \cdot \valpha & \geq \frac{1}{3}\cdot \rho \cdot  V_{max} \cdot \alpha_1 \\
       \beta \cdot V_{max} + (1-\beta)\cdot Z & \geq \frac{1}{3}\cdot \rho \cdot  V_{max} \cdot \frac{\alpha_1}{\valpha} \\
       \beta \cdot V_{max} + (1-\beta)\cdot Z  & \geq \frac{1}{3}\cdot \rho \cdot  V_{max} \cdot \frac{1}{k} \\
       Z & \geq \frac{1}{1-\beta} \cdot \left(\frac{\rho}{3k} - \beta \right)\cdot V_{max}
    \end{aligned}.
\end{equation}

Note that~\cref{eq:private_large_opt} holds for both $Z_1$ and $Z_2$ with probability at least $3/4$. Combining it with~\cref{eq:private_lower_bound}, 
\begin{equation*}
    \begin{aligned}
        \E\left[ \alg \right] & \geq \frac{\beta}{2} \cdot \valpha \cdot \frac{3}{4} \cdot  \frac{1}{1-\beta} \cdot \left(\frac{\rho}{3k} - \beta \right)\cdot V_{max} \\
        & \geq \frac{3}{8} \cdot \frac{\beta}{1-\beta} \cdot \left(\frac{\rho}{3k} - \beta \right) \cdot \opt
    \end{aligned}. 
\end{equation*}

By setting $\beta = 1-\sqrt{1-\frac{\rho}{3k}}$, we get the best approximation:
\begin{equation*}\label{eq:private_large_final}
    \E [\alg] \geq \frac{3}{8} \left( 1-\sqrt{1-\frac{\rho}{3k}} \right)^2 \cdot \opt .
\end{equation*}

\end{proof}
\section{Conclusion}\label{sec:con}

The paper delves into sponsored search auctions in modern advertising markets, introducing a novel bidder utility model and exploring two distinct settings within the model. Specifically, for the public allowance setting, we demonstrate the achievability of an optimal mechanism, while in the private allowance setting, we propose a truthful mechanism with a constant approximation ratio and a uniform-price auction with bounded approximation in large markets.

There leave quite a lot of possibilities for future work. One direction is investigating the lower bounds of the allowance utility model. It is noteworthy that~\cite{DBLP:conf/aaai/LvZZLYLCW23} establishes a lower bound of $5/4$ in their mixed-bidder setting, which directly extends to our private allowance setting as a special case. Consequently, a constant gap exists between the upper and lower bounds in the private allowance setting, posing an open question regarding the potential closure of this gap.


\begin{credits}
\subsubsection{\ackname} This work is supported by the National Key Research Project of China under Grant No. 2023YFA1009402, the National Natural Science Foundation of China (No. 62302166), the Scientific and Technological Innovation 2030 Major Projects under Grant 2018AAA0100902, the Shanghai Science and Technology Commission under Grant No.20511100200, the Dean's Fund of Shanghai Key Laboratory of Trustworthy Computing, ECNU, and the Key Laboratory of Interdisciplinary Research of Computation and Economics (SUFE), Ministry of Education.

\end{credits}

\bibliographystyle{splncs04}
\bibliography{ref}

\begin{thebibliography}{10}
\providecommand{\url}[1]{\texttt{#1}}
\providecommand{\urlprefix}{URL }
\providecommand{\doi}[1]{https://doi.org/#1}

\bibitem{DBLP:conf/adkdd/AuerbachGS08}
Auerbach, J., Galenson, J., Sundararajan, M.: An empirical analysis of return
  on investment maximization in sponsored search auctions. In: AdKDD@KDD.
  pp.~1--9. {ACM} (2008)

\bibitem{baisa2017auction}
Baisa, B.: Auction design without quasilinear preferences. Theoretical
  Economics  \textbf{12}(1),  53--78 (2017)

\bibitem{DBLP:conf/nips/BalseiroDMMZ21}
Balseiro, S.R., Deng, Y., Mao, J., Mirrokni, V.S., Zuo, S.: Robust auction
  design in the auto-bidding world. In: NeurIPS. pp. 17777--17788 (2021)

\bibitem{DBLP:conf/sigecom/BalseiroDMMZ22}
Balseiro, S.R., Deng, Y., Mao, J., Mirrokni, V.S., Zuo, S.: Optimal mechanisms
  for value maximizers with budget constraints via target clipping. In: {EC}.
  p.~475. {ACM} (2022)

\bibitem{DBLP:conf/ijcai/BirmpasCCL22}
Birmpas, G., Celli, A., Colini{-}Baldeschi, R., Leonardi, S.: Fair equilibria
  in sponsored search auctions: The advertisers' perspective. In: {IJCAI}. pp.
  95--101. ijcai.org (2022)

\bibitem{DBLP:journals/mor/CaragiannisV21}
Caragiannis, I., Voudouris, A.A.: The efficiency of resource allocation
  mechanisms for budget-constrained users. Math. Oper. Res.  \textbf{46}(2),
  503--523 (2021)

\bibitem{DBLP:journals/mor/ChenGL14}
Chen, N., Gravin, N., Lu, P.: Truthful generalized assignments via stable
  matching. Math. Oper. Res.  \textbf{39}(3),  722--736 (2014)

\bibitem{DBLP:conf/icml/0002W0S0YD23}
Chen, Y., Wang, Q., Duan, Z., Sun, H., Chen, Z., Yan, X., Deng, X.: Coordinated
  dynamic bidding in repeated second-price auctions with budgets. In: {ICML}.
  Proceedings of Machine Learning Research, vol.~202, pp. 5052--5086. {PMLR}
  (2023)

\bibitem{DBLP:conf/www/DengLX20}
Deng, X., Lin, T., Xiao, T.: Private data manipulation in optimal sponsored
  search auction. In: {WWW}. pp. 2676--2682. {ACM} / {IW3C2} (2020)

\bibitem{DBLP:conf/nips/DengZ21}
Deng, Y., Zhang, H.: Prior-independent dynamic auctions for a value-maximizing
  buyer. In: NeurIPS. pp. 13847--13858 (2021)

\bibitem{DBLP:conf/sigecom/DevanurW17}
Devanur, N.R., Weinberg, S.M.: The optimal mechanism for selling to a budget
  constrained buyer: The general case. In: {EC}. pp. 39--40. {ACM} (2017)

\bibitem{DBLP:conf/wine/LiMM10}
Li, S., Mahdian, M., McAfee, R.P.: Value of learning in sponsored search
  auctions. In: {WINE}. Lecture Notes in Computer Science, vol.~6484, pp.
  294--305. Springer (2010)

\bibitem{DBLP:conf/wine/LuXZ23}
Lu, P., Xu, C., Zhang, R.: Auction design for value maximizers with budget and
  return-on-spend constraints. In: {WINE}. Lecture Notes in Computer Science,
  vol. 14413, pp. 474--491. Springer (2023)

\bibitem{DBLP:conf/aaai/LvZZLYLCW23}
Lv, H., Zhang, Z., Zheng, Z., Liu, J., Yu, C., Liu, L., Cui, L., Wu, F.:
  Utility maximizer or value maximizer: Mechanism design for mixed bidders in
  online advertising. In: {AAAI}. pp. 5789--5796. {AAAI} Press (2023)

\bibitem{DBLP:journals/mor/Myerson81}
Myerson, R.B.: Optimal auction design. Math. Oper. Res.  \textbf{6}(1),  58--73
  (1981)

\bibitem{DBLP:journals/eatcs/RoughgardenI17}
Roughgarden, T., Iwama, K.: Twenty lectures on algorithmic game theory. Bull.
  {EATCS}  \textbf{122} (2017)

\bibitem{szymanski2006impact}
Szymanski, B.K., Lee, J.S.: Impact of roi on bidding and revenue in sponsored
  search advertisement auctions. In: Second Workshop on Sponsored Search
  Auctions (2006)

\bibitem{DBLP:journals/corr/WilkensCN16}
Wilkens, C.A., Cavallo, R., Niazadeh, R.: Mechanism design for value
  maximizers. CoRR  \textbf{abs/1607.04362} (2016)

\bibitem{DBLP:conf/www/WilkensCN17}
Wilkens, C.A., Cavallo, R., Niazadeh, R.: {GSP:} the cinderella of mechanism
  design. In: {WWW}. pp. 25--32. {ACM} (2017)

\end{thebibliography}

\end{document}